\newtheorem{theorem}{Theorem}
\newtheorem{lemma}{Lemma}
\newcounter{excounter}
\newenvironment{ex}[1][] 
{
	\refstepcounter{excounter}
	\par{\em Example \theexcounter\: (#1).}
}
{
    \hfill $\triangle$
}
\def\given{\:|\:}
\def\L{\mathsf{L}}
\def\H{\mathsf{H}}
\def\U{\mathsf{U}}
\def\B{\mathsf{B}}
\newcommand{\ciid}{C_{\mathrm{iid}}}
\def\Ca{Ca$^{2+}$}
\newcommand{\binent}{\mathscr{H}}
\newcommand{\C}{\mathsf{C}}
\newcommand{\N}{\mathsf{N}}
\renewcommand{\O}{\mathsf{O}}
\newcommand{\xlevel}{\mathsf{x}}
\newcommand{\xl}{\xlevel_\L}
\newcommand{\xh}{\xlevel_\H}
\newcommand{\pt}[1]{{\color{black} #1}}
\newcommand{\awe}[1]{{\color{black} #1}}
\newcommand{\rev}[1]{{\color{black} #1}}
\def\Pr{{\mathrm{Pr}}}
\title{\pt{The Channel Capacity of Channelrhodopsin\\ and Other Intensity-Driven Signal\\ Transduction Receptors}}
\author{Andrew W. Eckford, \IEEEmembership{Senior Member, IEEE}, and Peter J. Thomas
\thanks{Material in this paper was presented in part at the 2016 IEEE International Conference on Acoustics, Speech, and Signal Processing (ICASSP), Shanghai, China.}%
\thanks{This work was supported in part by a grant from the Natural Sciences and Engineering Research Council, and by grants from the National Science Foundation (DMS-1413770 and DEB-1654989). This research has also been supported in part by the Mathematical Biosciences Institute and the National Science Foundation under grant DMS 1440386.
}%
\thanks{Andrew W. Eckford is with the Department of Electrical Engineering and Computer Science, York University, 4700 Keele Street, Toronto, Ontario, Canada M3J 1P3. E-mail: aeckford@yorku.ca}
\thanks{Peter J. Thomas is with the Department of Mathematics, Applied Mathematics, and Statistics, and the Department of Biology, and the Department of Electrical Engineering and Computer Science, Case Western Reserve University, Cleveland, Ohio, USA 44106. E-mail: pjthomas@case.edu}
}
\begin{document}
%
\maketitle
\begin{abstract}
Biological systems transduce signals from their surroundings through a myriad of pathways. In this paper, we describe signal transduction as a communication system: the signal transduction receptor acts as the receiver in this system, and can be modeled as a finite-state Markov chain with transition rates governed by the input signal.
Using this general model, we give the mutual information under IID inputs in discrete time, and obtain the mutual information in the continuous-time limit. We show that the mutual information has a concise closed-form expression with clear physical significance. We also give a sufficient condition under which the Shannon capacity is achieved with IID inputs.
We illustrate our results with three examples: the light-gated  Channelrhodopsin-2 (ChR2) receptor; the ligand-gated nicotinic acetylcholine (ACh) receptor; and the ligand-gated calmodulin (CaM) receptor. In particular, we show that the IID capacity of the ChR2 receptor is equal to its Shannon capacity. We finally discuss how the results change if only certain properties of each state can be observed, such as whether an ion channel is open or closed.
\end{abstract}

\section{Introduction}

Living cells take in information from their surroundings through myriad \emph{signal transduction} processes.
Signal transduction takes many forms: the input signal can be carried by changes in chemical concentration, electrical potential, light intensity, mechanical forces, and temperature, \textit{inter alia}.  In many instances these \emph{extracellular} stimuli trigger \emph{intracellular} responses that can be represented as transitions among a discrete set of states \cite{hlavacek2006}.
Models of these processes are of great interest to mathematical and theoretical biologists \cite{janes2006}.

The ``transduction'' of the signal occurs through the physical effect of the input signal on the transition rates among the various states describing the receptor.  
\pt{An early} 
mathematical model of this type was the voltage-sensitive transitions among several open and closed ion channel states in Hodgkin and Huxley's  model for the conduction of sodium and potassium ions through the membranes of electrically excitable cells \cite{hodgkin1990}. 
Presently, many such models are known for signal transduction systems, such as: the detection of calcium concentration signals by the calmodulin protein \cite{faas11}, binding of the acetylcholine (ACh) neurotransmitter to its receptor protein \cite{col82}, and modulation of the channel opening transition by light intensity in the channelrhodopsin (ChR) protein \cite{nag03}.  In each of these examples the channel may be modeled as a weighted, directed graph, in which the vertices represent the discrete channel states, and the weighted edges represent \textit{per capita} transition rates, some of which can be modulated by the input signals.

Mutual information, and Shannon capacity, arise in a variety of biological contexts. For example, mutual information may predict the differential growth rates of organisms learning about their environment \cite{tkacik2016}, based on the Kelly criterion \cite{kelly1956}. For biological communication systems, achieving a distortion criterion (expressed as mutual information) need not require complicated signal processing techniques; see \cite[Example 2]{gastpar2003}. Moreover, the free energy cost of molecular communication (such as in signal transduction) has a mathematical form similar to mutual information \cite{eckford2018}, leading to thermodynamic bounds on capacity per unit energy cost (cf. \cite{verdu1990}).

Stochastic modeling of signal transduction as a communication channel has considered the chemical reactions in terms of Markov chains \cite{thomas2003} and in terms of the ``noise'' inherent in the binding process \cite{pierobon2011}. For simplified two-state Markov models, Shannon capacity of signal transduction has been calculated for slowly changing inputs \cite{ein11} and for populations of communicating bacteria \cite{aminian2015}. Our own previous work has investigated the capacity of signal transduction: in \cite{ThomasEckford2016}, we obtained the Shannon capacity of two-state Markov signal transduction under arbitrary inputs, and showed that the capacity for multiple independent receptors has the same form \cite{thomas2016}. Related channel models have been studied in the information-theoretic literature, such as the unit output memory channel \cite{che05}, the ``previous output is the state'' (POST) channel \rev{\cite{AsnaniPermuterWeissman2013IEEE_ISIT,PermuterAsnaniWeissman2014IEEETransIT}}; capacity results for some channels in these classes were recently given in \cite{stavrou2016}.

The present paper \pt{focuses on} 
the mutual information and capacity of finite-state signal transduction channels. Generalizing previous results, we provide discrete-time, finite-state channel models for a wide class of signal transduction receptors, giving Channelrhodopsin-2 (ChR2), Acetylcholine (ACh), and Calmodulin (CaM) as specific examples. 
We also provide an explicit formula for the mutual information of this \pt{class of models} 
under independent, identically distributed (IID) inputs \rev{(Theorem 1)}. 
Subsequently, we consider 
the continuous time limit
as the interval between the discrete-time instants goes to zero, and find a simple closed-form expression for the mutual information \rev{(Theorem 2)}, with a natural physical interpretation. 
We further give conditions under which our formula gives the Shannon capacity of the channel, namely that there is exactly one transition in the Markov chain that is sensitive to the channel input \rev{(Theorem 3), and we use this result to (numerically) find the Shannon capacity of ChR.}

The remainder of the paper is organized as follows: in \rev{Section} II, we give a generalized model for discrete-time, finite-state signal transduction systems; in \rev{Section} III, we discuss signal transduction as a communication system, deriving expressions for the mutual information and giving our main results; and in \rev{Section} IV, we discuss the biological significance of the results, \pt{as well as the limitations of our analysis}.

\section{Model}
\label{sec:Model}

\subsection{Physical model}


Signal transduction encompasses a wide variety of physical processes. 
For example, in a {\em ligand-gated} system, signals are transmitted using concentrations of signaling molecules, known as {\em ligands}, which bind to receptor proteins. As another example, in a {\em light-gated} system, signals are transmitted using light, where the receptor absorbs photons. Other possibilities exist, such as voltage-gated ion channels. The receptor, \pt{often} 
located on the surface of the cell, forms the receiver in the signal transduction system, and conveys (or {\em transduces}) the signal across the cell membrane; the receptor is the focus of our analysis.

Signal transduction receptors share a mathematical model: they can be viewed as finite-state, intensity-modulated Markov chains, in which the transition rates between certain pairs of states are sensitive to the input (though other transitions may be independent of the input).
Our main examples in this paper focus on ligand- and light-gated receptors. For example, in a ligand-gated system, the binding of the ligand results in a change in the receptor, which then produces {\em second messengers} (normally a different species than the ligand) to convey the message to the cell interior. In a light-gated system, the incident photon causes a similar change in the receptor, which may open to allow an ion current to pass to the interior of the cell. 
In either case, there may be a relaxation process which returns the receptor to the ``ready'' state, and this process may be independent of the signal; or other processes that are either sensitive to or independent of the signal, depending on the purpose of the receptor.

In the next two sections, we describe the Markov chain model for receptors, both in continuous and in discrete time.
Although we focus on ligand- and light-gated receptors, we emphasize that our framework is general enough to include other kinds of receptors.



\subsection{Continuous time: Master equation kinetics}
\label{sec:MasterEquation}

Receptors are finite-state Markov chains. For a receptor with $k$ discrete states, 
there exists a $k$-dimensional vector of state occupancy probabilities $\mathbf{p}(t)$,
given by
\begin{equation}
	\mathbf{p}(t) = \left[ p_1(t), \: p_2(t), \: \ldots, \:p_{k}(t) \right] ,
\end{equation}
where $p_i(t)$ represents the probability of a given receptor occupying state $i$ at time $t$. The environmental conditions at the receptor, such as light level or ligand concentration, are known as the {\em input} $x(t)$.

The chemical kinetics of the receptor are captured by a differential equation known
as the {\em master equation} \cite{Gardiner2004}. Let $Q = [q_{ij}\rev{(x)}]$ represent a $k \times k$ matrix of \rev{\textit{per capita} transition rates,}
where $q_{ij}\rev{(x)}$ represents the instantaneous rate at which receptors starting in state $i$
enter state $j$. 
It is helpful to visualize the matrix $Q$ using a graph:
\begin{itemize}
	\item There are $k$ vertices, representing the states; and  
    \item A directed edge is drawn from vertex $i$ to $j$ if and only if $q_{ij}\rev{(x)} > 0$ \rev{for some  $x$}. 
\end{itemize}
Changing from one state to another is called a {\em transition}, so the graph corresponding to $Q$ depicts the possible transitions. A transition $i \rightarrow j$ may be {\em sensitive}, i.e. $q_{ij}$ \rev{varies as} a function of the input $x(t)$, or {\em insensitive}, $q_{ij}$ is constant with respect to $x(t)$.

Using $Q$, the master equation is given by
\begin{equation}
	\label{eqn:MasterEquation}
	\frac{d\mathbf{p}(t)}{dt} = \mathbf{p}(t)Q\rev{(x(t))} .
\end{equation}
%
%
We use the notation from \cite{GroffDeRemigioSmith2009chapter}:
%
\begin{itemize}
	\item States take a compound label, consisting of a state property and a state number. The state number is unique to each state, but the state property may be shared by multiple states. For example, in each state the receptor's ion channel might be either open $\O$ or closed $\C$; the state label $\C_1$ means that in state 1 the channel is closed, and $\O_2$ means that in state 2 the channel is open. \awe{In this paper we use the state {\em number} rather than the state {\em property}. (Since we show that the state numbers form a Markov chain, in general the state properties form a hidden Markov chain; we discuss this further in Section IV.)}
	\item \rev{We assume that} rates which are sensitive to the input are directly proportional to the input $x(t)$. For example, 
	$q_{12}x(t)$ is the transition rate from $1 \rightarrow 2$, 
	which is sensitive, while  
	$q_{31}$ is the transition rate from $3 \rightarrow 1$, which is insensitive.
	\item The $i$th diagonal element \rev{of $Q$} is written $R_i$, and is set so that the $i$th row sums to zero (so, if $x(t)$ appears in the $i$th row, $R_i$ may depend on $x(t)$).
\end{itemize}
Taking sensitive rates to be proportional to the signal $x(t)$ is a key modeling assumption; it is satisfied for the examples we consider, but there exist systems in which the signal acts nonlinearly on the rate.

The following three examples illustrate the use of our notation, and give practical examples of receptors along with their transition graphs and rate constants.

\begin{ex}[Channelrhodopsin-2] 
The Channelrhodopsin-2 (ChR2) receptor is a light-gated ion channel.
The receptor has three states, named Closed ($\mathsf{C}_1$), Open ($\mathsf{O}_2$), and Desensitized ($\mathsf{C}_3$).
The channel-open ($\mathsf{O}$) state $\mathsf{O}_2$ is the only state in which the ion channel is open, passing an ion current.
The channel-closed ($\mathsf{C}$) states, $\mathsf{C}_1$ and $\mathsf{C}_3$, are distinct in that the receptor is light-sensitive in
state $\mathsf{C}_1$, and insensitive in state $\mathsf{C}_3$ \cite{nag03}. The rate matrix for ChR2 is
\begin{equation}
	\label{eqn:ChR2-rate-matrix}
	Q = \left[
		\begin{array}{ccc}
			R_1 & q_{12}x(t) & 0 \\
			0 & R_2 & q_{23} \\
			q_{31} & 0 & R_3
		\end{array}
	\right] .
\end{equation}
where $x(t) \in [0,1]$ is the relative intensity.
To keep the row sums equal to zero, we set $R_1 = - q_{12}x(t)$,
$R_2 = - q_{23}$, and
$R_3 = - q_{31}$.
Fig.~\ref{fig:ChR2} shows
state labels and allowed state transitions.
\begin{figure}
	\begin{center}
	\includegraphics[width=2.5in]{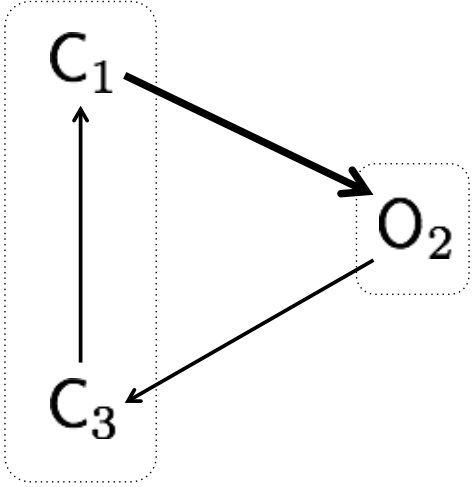}
	\end{center}
	\caption{\label{fig:ChR2} Depiction of allowed state transitions for ChR2. Sensitive 
	transitions are depicted with {\bfseries bold} arrows. 
	States are labelled by their ion channel state: $\{\mathsf{C},\mathsf{O}\}$ for closed and open, respectively; state number is in subscript. Dashed lines surround all states in either the closed or open state. Transition rates, listed in Table \ref{tab:ChR2parameters}, correspond to the vertices associated with each directed edge: for example,
	the rate from state $\mathsf{O}_2$ to state $\mathsf{C}_3$ is $q_{23}$.}
\end{figure}
Parameter values from the literature are given in Table \ref{tab:ChR2parameters}. 
%
\begin{table}[h!]\begin{center}
	\begin{tabular}{c|c|c} \hline
		Parameter & from \cite{nag03} & Units \\ \hline
		$q_{12}x(t)$ & $(5 \times 10^3) x(t)$ &  s$^{-1}$   \\ \hline
		$q_{23}$ & 50 &  s$^{-1}$   \\ \hline
		$q_{31}$ & 17 &  s$^{-1}$ \\ \hline
	\end{tabular}
	\ \\
    \ \\
	\caption{\label{tab:ChR2parameters} Rate parameters for ChR2, adapted from \cite{nag03}, where $x(t) \in [0,1]$ represents the relative light intensity.}
	\end{center}
\end{table}
\end{ex}

\begin{ex}[Acetylcholine]
The Acetylcholine (ACh) receptor is a ligand-gated ion channel. \pt{Following \cite{col82}, we model the receptor as a conditional Markov process on} five states, with rate matrix
\begin{equation}
	\label{eqn:AChRateMatrix}
	Q =
	\left[
		\begin{array}{ccccc}
			R_1 & q_{12}x(t) & 0 & q_{14} & 0 \\
			q_{21} & R_2 & q_{23} & 0 & 0 \\
			0 & q_{32} & R_3 & q_{34} & 0 \\
			q_{41} & 0 & q_{43}x(t) & R_4 & q_{45} \\
			0 & 0 & 0 & q_{54}x(t) & R_5 
		\end{array}
	\right] .
\end{equation}
%
%

There are three sensitive transitions: $\rev{q_{12}}x(t)$, $\rev{q_{43}}x(t)$, and $\rev{q_{54}}x(t)$, which 
are proportional to ligand concentration $x(t)$. 
For the purposes of our analysis, we use a range of $x(t) \in [10^{-7},10^{-5}]$.
Fig.~\ref{fig:ACh} shows the
allowed state transitions. 

The states in ACh correspond to the binding of a ligand to one of two binding sites on the receptor. In state $\mathsf{C}_5$, neither site is occupied; in states $\mathsf{C}_4$ and $\mathsf{O}_1$, one site is occupied; and in states $\mathsf{C}_3$ and $\mathsf{O}_2$, both sites are occupied.  

Table \ref{tab:AChparameters} gives
parameter values; the concentration of ACh, $x(t)$, is measured in mol/$\ell$.

The same state-naming convention is used in the figure as with ChR2: states with
an open ion channel are $\mathsf{O}_{1}$ and $\mathsf{O}_2$; states 
with a closed ion channel are $\mathsf{C}_3$, $\mathsf{C}_4$, and $\mathsf{C}_5$.
\begin{table}\begin{center}
	\begin{tabular}{c|c|c|c} \hline
		Parameter & Name in \cite{col82} & Value/range & Units \\ \hline
		$q_{12}x(t)$ & $k_{+2}x$ & $(5 \times 10^8) x(t)$ & s$^{-1}$ \\ \hline
		$q_{14}$ & $\alpha_1$ & $3 \times 10^3$ & s$^{-1}$ \\ \hline
		$q_{21}$ & $2 k_{-2}^*$ & $ 0.66 $ & s$^{-1}$ \\ \hline
		$q_{23}$ & $\alpha_2$ & $5 \times 10^2$ & s$^{-1}$ \\ \hline
		$q_{32}$ & $\beta_2$ & $1.5 \times 10^4$ & s$^{-1}$ \\ \hline
		$q_{34}$ & $2 k_{-2}$ & $ 4 \times 10^3$ & s$^{-1}$ \\ \hline 
		$q_{41}$ & $\beta_1$ & 15 & s$^{-1}$ \\ \hline
		$q_{43}x(t)$ & $k_{+2}x$ & $(5 \times 10^8) x(t)$ & s$^{-1}$  \\ \hline
		$q_{45}$ & $k_{-1}$ & $ 2 \times 10^3$ & s$^{-1}$ \\ \hline
		$q_{54}x(t)$ & $2 k_{+1} x$ & $(1 \times 10^8) x(t)$ & s$^{-1}$  \\ \hline
\end{tabular}
\end{center}
	\ \\
	\caption{\label{tab:AChparameters}Rate parameters for ACh, adapted from \cite{col82}, where $x(t)$ represents the molar concentration of ACh in mol/$\ell$. Here we use a range of $x(t) \in [10^{-7},10^{-5}]$.}
\end{table}
\begin{figure}
	\begin{center}
	\includegraphics[width=3in]{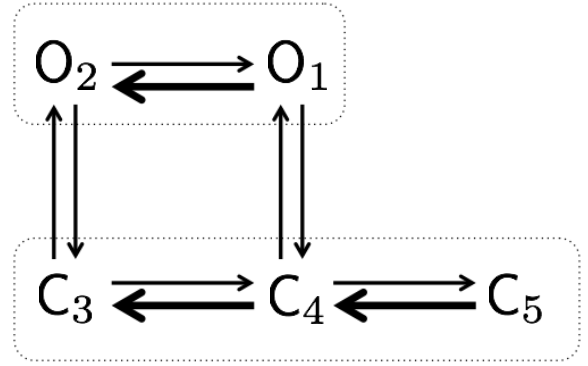}
	\end{center}
	\caption{\label{fig:ACh} Depiction of allowed state transitions for ACh. Sensitive 
	transitions are depicted with {\bfseries bold} arrows. 
	States are labelled by their ion channel state: $\{\mathsf{C},\mathsf{O}\}$ for closed and open, respectively; state number is in subscript. Dashed lines surround all states in either the closed or open state. Transition rates, listed in Table \ref{tab:AChparameters}, correspond to the vertices associated with each directed edge: for example,
	the rate from state $\mathsf{O}_2$ to state $\mathsf{C}_3$ is $q_{23}$.}
\end{figure}
\end{ex}

\begin{ex}[Calmodulin]
The Calmodulin (CaM) receptor is a ligand-gated receptor.
The CaM receptor consists of \pt{four} binding sites, \pt{two on the C-terminus of the CaM protein and two on the N-terminus \cite{ChinMeans2000TrendsCellBiol,DeMariaEtAlYue2001Nature,KellerFranksBartolSejnowski2008PLoSOne}.}  
Each \pt{end of the protein} can bind 0, 1, or 2 \pt{calcium ions, leading to} nine possible states.
For CaM, rather than an ion channel, it is important whether the $\C$ or $\N$ end of the receptor is completely bound (i.e., has both binding sites occupied by ligands). This property is represented by four symbols: $\emptyset$ if neither end is completely bound; $\C$ if the $\C$ end is completely bound; $\N$ if the $\N$ end is completely bound; and $\N\C$ if both ends are completely bound.

\begin{figure*}[!t]
\normalsize
%
\begin{align}
	\label{eqn:CaMQ}
	Q &= \left[
		\begin{array}{ccccccccc}
			R_0 & q_{01}x(t) & 0 & q_{03}x(t) & 0 & 0 & 0 & 0 & 0 \\
			q_{10} & R_1 & q_{12}x(t) & 0 &q_{14}x(t) & 0 & 0 & 0 & 0\\
			0 & q_{21} & R_2 & 0 & 0 & q_{25}x(t) & 0 & 0 & 0 \\
			q_{30} & 0 & 0 & R_3 & q_{34}x(t) & 0 & q_{36}x(t) & 0 & 0 \\
			0 & q_{41} & 0 & q_{43} & R_4 & q_{45}x(t) & 0 & q_{47}x(t) & 0 \\
			0 & 0 & q_{52} & 0 & q_{54} & R_5 & 0 & 0 & q_{58}x(t) \\
			0 & 0 & 0 & q_{63} & 0 & 0 & R_6 & q_{67}x(t) &0 \\
			0 & 0 & 0 & 0 & q_{74} & 0 & q_{76} & R_7 & q_{78}x(t) \\
			0 & 0 & 0 & 0 & 0 & q_{85} & 0 & q_{87} & R_8
		\end{array}
	\right] 
\end{align}
%
\hrulefill
\vspace*{4pt}
\end{figure*}

State configuration and allowed transitions are depicted in Figure \ref{fig:CaM}. The rate matrix is given in (\ref{eqn:CaMQ}),
%
with values given in Table \ref{tab:CaMparameters}, and where the molar concentration of calcium is $x(t) \in [10^{-7},10^{-6}]$.
\begin{table}\begin{center}
	\begin{tabular}{c|c|c|c} \hline
		Parameter & Name in \cite{faas11} & Value/range & Units \\ \hline
		$q_{01}x(t)$, $q_{34}x(t)$, $q_{67}x(t)$ 
			& $k_{\mathrm{on (T), N}} $ 
			&  $(7.7 \times 10^8) x(t)$ & s$^{-1}$  \\ \hline
		$q_{10}$, $q_{43}$,  $q_{76}$ 
			& $k_{\mathrm{off (T), N}}$ 
			&  $1.6 \times 10^5$ & s$^{-1}$  \\ \hline
		$q_{12}x(t)$, $q_{45}x(t)$, $q_{78}x(t)$ 
			& $k_{\mathrm{on (R), N}}$ 
			&  $(3.2 \times 10^{10}) x(t)$ & s$^{-1}$  \\ \hline
		$q_{21}$,  $q_{54}$, $q_{87}$ 
			& $k_{\mathrm{off (R), N}}$ 
			& $2.2 \times 10^4$ & s$^{-1}$  \\ \hline
		$q_{03}x(t)$, $q_{14}x(t)$, $q_{25}x(t)$
			& $k_{\mathrm{on (T), C}}$ 
			& $(8.4 \times 10^7) x(t)$ & s$^{-1}$ \\ \hline
		$q_{30}$, $q_{41}$, $q_{52}$ 
			& $k_{\mathrm{off (T), C}}$ 
			& $2.6 \times 10^3$ & s$^{-1}$  \\ \hline
		$q_{36}x(t)$, $q_{47}x(t)$, $q_{58}x(t)$ 
			& $k_{\mathrm{on (R), C}}$ 
			& $(2.5 \times 10^7) x(t)$ & s$^{-1}$  \\ \hline
		$q_{63}$, $q_{74}$, $q_{85}$
			& $k_{\mathrm{off (R), C}}$ 
			& 6.5 & s$^{-1}$  \\ \hline
	\end{tabular}
	\end{center}
\ \\
	\caption{\label{tab:CaMparameters}Rate parameters for CaM, adapted from \cite{faas11}, where $x(t) \in [10^{-7},10^{-6}]$ represents the molar concentration of calcium in mol/$\ell$.}
\end{table}
\begin{figure}
	\begin{center}
	\includegraphics[width=3.25in]{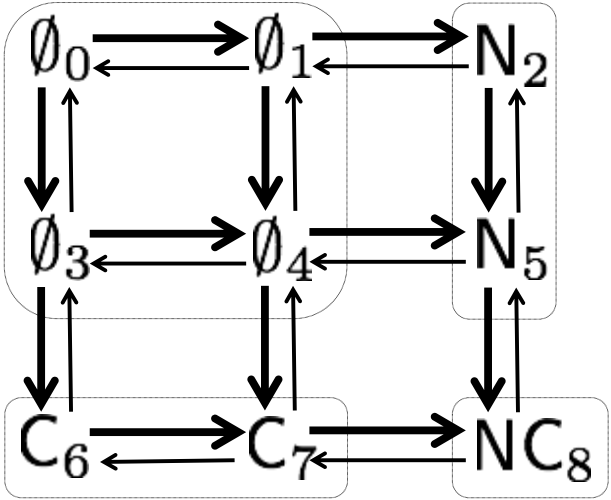}
	\end{center}
	\caption{\label{fig:CaM} Depiction of allowed state transitions for CaM. Sensitive 
	transitions are depicted with {\bfseries bold} arrows. 
	States are labelled by the status of the $\C$ or $\N$ end of the receptor: $\emptyset$ if neither end is completely bound; $\C$ if the $\C$ end is completely bound; $\N$ if the $\N$ end is completely bound; and $\N\C$ if both ends are completely bound. Transition rates, listed in Table \ref{tab:CaMparameters}, correspond to the vertices associated with each directed edge.}
\end{figure}
\end{ex}


For each of the preceding examples, the rate constants depend on environmental conditions, and
thus can be reported differently in different sources 
(see, e.g., \cite{lin09} for different rate constants for ChR2).

\subsection{From the master equation to discrete-time Markov chains}

\rev{The continuous-time master equation for the receptor dynamics \eqref{eqn:MasterEquation} describes the evolution of a conditional probability $\mathbf{p}(t)\equiv  E[Y(t)\given\mathcal{F}_X(t)],$ where $Y(t)$ is the continuous time, discrete state \textit{c\`{a}dl\`{a}g}
%
%
process giving the channel state, $\mathcal{F}_X(t)$ is the filtration generated by the input process $X(t)$, and $E[\cdot\given\cdot]$ is  conditional expectation \cite{grimmett2001probability}.
Establishing the appropriate ensemble of input processes and analyzing mutual information and capacity involve technical issues that do not shed light on the nature of biological signal transduction.
Therefore we do not undertake a rigorous analysis of the continuous-time communications channels described by \eqref{eqn:MasterEquation}  in this paper.  
Rather, we introduce a discrete-time, discrete-state channel, motivated by the continuous-time channel, which can be rigorously analyzed, and study its properties both with a fixed timestep $\Delta t$, and later in the limit $\Delta t\to 0$.   
The discrete-time Markov chain model allows us to rely on capacity results for discrete-time Markov channels.}


\rev{We obtain a discrete-time approximation to} the master equation by writing
\begin{align}
	\label{eqn:Markov-1}
	\frac{d\mathbf{p}(t)}{dt} = \mathbf{p}(t) Q  
    = \frac{\mathbf{p}(t + \Delta t) - \mathbf{p}(t)}{\Delta t} +o(\Delta t),\text{ as }\Delta t\to 0,
\end{align}
\rev{
where we simplify the notation by writing $Q(x(t))$ as simply $Q$.
Manipulating the middle and right expression in (\ref{eqn:Markov-1}) \pt{gives} }
\rev{\begin{align}
	\mathbf{p}(t + \Delta t) &= \Delta t \,\mathbf{p}(t) Q + \mathbf{p}(t) +o(\Delta t)\\
	&= \Delta t\, \mathbf{p}(t) Q + \mathbf{p}(t) I+o(\Delta t) \\
	&= \mathbf{p}(t) \left( I + \Delta t Q \right)+o(\Delta t),\text{ as }\Delta t\to 0,
\end{align}}
where $I$ is the identity matrix. 
\rev{In order to arrive at a discrete-time model, we introduce the approximation $\{\mathbf{p}_i\}_{i\in\mathbb{N}_+}$ satisfying} 
\begin{equation}
	\mathbf{p}_i = \mathbf{p}(i \Delta t)+o(\Delta t),\text{ as }\Delta t\to 0,
\end{equation}
\rev{and} arrive at a discrete-time approximation to (\ref{eqn:Markov-1}),
\begin{equation}
	\mathbf{p}_{i+1} = \mathbf{p}_i (I+\Delta t\, Q).
\end{equation}
Thus, we have a discrete-time Markov chain with transition probability matrix 
\begin{equation}
	\label{eqn:Markov-last}
	P = I + \Delta t Q.
\end{equation}
The matrix $P$ satisfies the conditions of a Markov chain transition
probability matrix (nonnegative, row-stochastic) as long as $\Delta t$ is small enough.  
\rev{However, note that $P$ (and $Q$) are dependent on $x(t)$, so the Markov chain is not generally time-homogeneous if $x(t)$ is known (cf. (\ref{eqn:HomogeneousP})).}



\section{Signal transduction as a communications system}

In this section we give our main results, in which we describe and analyze signal transduction as a communication system. A brief roadmap to our results is given as follows: we first define the communication system in terms of input, output, and channel; we give the mutual information of the general discrete-time model under IID inputs (Theorem 1 and equation (\ref{eqn:GeneralIID1})); we take the continuous-time limit of the mutual information rate, showing that the expression for mutual information has a simple factorization (Theorem 2 and equation (\ref{eqn:Theorem1})); we give a physical interpretation of the factorization in (\ref{eqn:Theorem1}); we give general conditions under which the Shannon capacity is satisfied by IID inputs (Theorem 3); and finally, we give an example calculation using ChR2 (Example 4).

\subsection{Communication model of receptors}

We now discuss how the receptors can be described as 
information-theoretic communication systems: that is, in terms of input, output, and conditional
input-output PMF.

{\em Input:} As discussed in Section II, the receptor is sensitive to given properties of the environment; previous examples included light intensity or ligand concentration. The receptor input $x(t)$ is the value of this property at the surface of the receptor.
The input is discretized in time: for integers $i$, the input is $x(i \Delta t)$; we will write
$x_i = x(i \Delta t)$. 
We will also discretize the amplitude,
so that for every $t$, $x_i \in \{\xlevel_1,\xlevel_2,\xlevel_3,\ldots,\xlevel_k\} =: \mathcal{X}$. 
We will assume that the $\xlevel_i$ are 
distinct and increasing; further, we assign the lowest and highest values special symbols:
\begin{align}
	\xl &:= \xlevel_1 \\
    \xh &:= \xlevel_k .
\end{align}
In Section II, we gave the concentrations or intensities over a range of values (such as $x(t) \in [0,1]$ for ChR2). Thus, we select $\xl$ and $\xh$ as the minimum and maximum values of this range, respectively.

{\em Output:} 
In this paper, the output $y(t)$ of the communication system is 
the receptor state number, given by the {\em subscript} of the state label: for example, if the state is $\C_3$, then $y(t) = 3$.
This is discretized to $y_i = y(i\Delta t)$. The discrete channel inputs and outputs form vectors: in terms of notation, we write $\mathbf{x} = [x_1,x_2,\ldots,x_n]$ and $\mathbf{y} = [y_1,y_2,\ldots,y_n]$.

{\em Conditional input-output PMF:}
From (\ref{eqn:Markov-1})--(\ref{eqn:Markov-last}), $\mathbf{y}$ forms a 
Markov chain given $\mathbf{x}$, so 
\begin{equation}
	\label{eqn:ReceptorMarkov}
	p(\mathbf{y}|\mathbf{x}) = \prod_{i=1}^n p(y_i \given x_i,y_{i-1}) ,
\end{equation}
where $p(y_i \given x_i,y_{i-1})$ is given by the appropriate entry in the matrix $P$, and where $y_0$ is null.%
\footnote{\rev{Notation: (1) We will drop subscripts if it is unambiguous to do so, i.e., normally $p(x)$ signifies $p_X(x)$; (2)} We say a variable is ``null'' if it vanishes under conditioning, i.e., if $y_0$ is null, then $p(y_1 \given x_1, y_0) = p(y_1 \given x_1)$.}
The following diagram \eqref{diag:xy} indicates the conditional dependencies:
\begin{equation}\label{diag:xy}
\begin{array}{ccccccccccc}
&X_1&&X_2&&X_3&&X_4&&X_5&\cdots\\
&\downarrow&&\downarrow&&\downarrow&&\downarrow&& \downarrow   \\
(Y_0)&\longrightarrow&Y_1&\longrightarrow&Y_2&\longrightarrow&Y_3&\longrightarrow&Y_4&\cdots 
\end{array}\end{equation}
As an example, consider ACh: suppose $y_{i-1} = 1$, $y_i = 2$, and $x_i = \xh$. Then from 
(\ref{eqn:Markov-last}) and Table \ref{tab:AChparameters}, we have
$p_{Y_i \given Y_{i-1},X_i}(2 \given 1,\xh) = \Delta t q_{12}(t) = (5 \times 10^8) \xh \Delta t$.  
From (\ref{diag:xy}) and the definition of 
$P$, $p(y_i \given y_{i-1},x_i)$ does not depend on $i$; 
that is, the channel's input-output structure is time-invariant. 

For a discrete-time Markov chain, the receptor states form a graph with vertex set $\mathcal{Y}$ and directed edges $\mathcal{E}\subset\mathcal{Y}\times\mathcal{Y}$, with pair $(y_{i-1},y_i)\in\mathcal{E}$ if $\max_{x_i\in\mathcal{X}} p(y_i\given x_i,y_{i-1}) > 0$, that is, for at least some input value there is a direct transition from $y_{i-1}$ to $y_i$. Notice that, under this definition, self-transitions are {\em included} in $\mathcal{E}$, even though (for convenience) they are not depicted in the state-transition diagrams.

We say the transition from state $y_{i-1}$ to $y_i$ is \emph{insensitive to the input}, or just \emph{insensitive}, if, for all $x_i \in \mathcal{X}$, we have $p(y_i \given x_i,y_{i-1})=p(y_i \given y_{i-1})$ 
(see Section \ref{sec:MasterEquation}).
Otherwise, the transition is \emph{sensitive}. We let $\mathcal{S}\subseteq\mathcal{E}$ denote the subset of sensitive edges.
(If state $y_{i-1} \in \mathcal{Y}$ is the origin for a sensitive transition, i.e., there is at least one $(y_{i-1},y_i \neq y_{i-1}) \in \mathcal{S}$, then the self-transition $(y_{i-1},y_i = y_{i-1})$ is normally sensitive as well, but this condition is not required for our analysis.)

For a channel with inputs $\mathbf{x}$ and  outputs $\mathbf{y}$ \rev{(both of length $n$)}
the mutual information $I(\mathbf{X};\mathbf{Y})$ gives the maximum information rate that may be 
transmitted reliably over the channel for a given input distribution. 
Mutual information is given by
\begin{align}
 	I(\mathbf{X};\mathbf{Y}) 
 	\label{eqn:MutualInformation}
 	&= \sum_{\mathbf{x},\mathbf{y}} p(\mathbf{x}) p(\mathbf{y}\given\mathbf{x})  
 		\log \frac{p(\mathbf{y}\given \mathbf{x})}{p(\mathbf{y})} ,
\end{align}
where $p(\mathbf{y}\given \mathbf{x})$ is the conditional probability mass function (PMF) of $\mathbf{Y}$.  
%

As $n \rightarrow \infty$, generally $I(\mathbf{X};\mathbf{Y}) \rightarrow \infty$ as well; 
in this case, it is more useful to calculate the mutual information rate, \rev{which we introduce in the next section.}

%

\subsection{Receptor IID capacity}

\rev{Our focus in the remainder of this paper is on IID input distributions.} Although
IID inputs \rev{may not be} realistic for chemical diffusion channels, such as for ligand-gated receptors (as concentration 
may persist for long periods of time), they can be capacity-achieving in these channels \rev{(see, e.g., \cite{ThomasEckford2016})}; moreover, IID input distributions may be physically realistic for light-gated channels.

\rev{Starting with (\ref{eqn:MutualInformation}), where $\mathbf{x}$ and $\mathbf{y}$ are both of fixed and finite length $n$, the Shannon capacity $C(n)$ is found by maximizing $I(\mathbf{X};\mathbf{Y})$ with respect to the input distribution $p(\mathbf{x})$, i.e.,
\begin{align}
    \label{eqn:CapacityDefinition}
    C(n) = \max_{p(\mathbf{x})} I(\mathbf{X};\mathbf{Y}) .
\end{align}
%
%
where the limit is taken over all possible length-$n$ input distributions $p(\mathbf{x})$ (not necessarily IID).}

\rev{
If the input $p(\mathbf{x})$ is restricted to the set of IID input distributions,
which is well defined for each $n$ (i.e., $p(\mathbf{x}) = \prod_{i=1}^n p(x_i)$),
then $I(\mathbf{X};\mathbf{Y})$ is also well defined for each $n$ (see (\ref{eqn:MutualInformation})). Furthermore, for each $n$
we have the IID capacity, written $\ciid(n)$:
\begin{equation}
	\label{eqn:IIDCapacityDefinition}
    \ciid(n) = \max_{p(x_i)} I(\mathbf{X};\mathbf{Y}) .
\end{equation}
where the maximum is taken over all possible settings of $p(x_i)$.
}

\rev{We can use (\ref{eqn:MutualInformation}) and (\ref{eqn:IIDCapacityDefinition}) to obtain information rates per channel use. For a given IID input distribution $p(x)$,
the IID mutual information rate is given by
\begin{align}
    \label{eqn:InfoRateDefinition}
    \mathcal{I}(X;Y) = \lim_{n \rightarrow \infty} \frac{1}{n} I(\mathbf{X};\mathbf{Y}).
\end{align}
Furthermore, the maximum IID information rate is given by
%
%
\begin{align}
    \label{eqn:ciid}
    \ciid := \lim_{n \rightarrow \infty} \frac{1}{n} \ciid(n) .
\end{align}
We derive these quantities in the remainer of the section, in which it will be clear that the limits in (\ref{eqn:InfoRateDefinition})---(\ref{eqn:ciid}) exist. 
We start by deriving $I(\mathbf{X};\mathbf{Y})$ under IID inputs, and showing how it is calculated using quantities introduced in Section II. Finally, in Theorem \ref{thm:MutualInformationRate}, we give an expression for $\mathcal{I}(X;Y)$, and show that $\mathcal{I}(X;Y) = \ciid$.
}




Recall $p(\mathbf{y}\given\mathbf{x})$ from (\ref{eqn:ReceptorMarkov}).
Under IID inputs, it can be shown (see \cite{che05,ThomasEckford2016}) that the receptor states $Y^n$ form a time-homogeneous Markov
chain, that is,
\begin{equation}
	\label{eqn:Conditional-2}
	p(\mathbf{y}) = \prod_{i=1}^n p(y_i \given y_{i-1}),
\end{equation}
where $y_0$ is again null, and where 
\begin{equation}
    \label{eqn:Conditional-2a}
	p(y_i \given y_{i-1}) = \sum_{x_i} p(y_i \given x_i, y_{i-1}) p(x) .
\end{equation}
\rev{
Furthermore, let $\bar{P}$ represent the transition probability matrix of $Y^n$. 
Recall (\ref{eqn:Markov-last}), in which $P$ was dependent on $x$; using (\ref{eqn:Conditional-2a}), we can write
\begin{align}
    \label{eqn:HomogeneousP}
    \bar{P} = E[P] = I + \Delta t E[Q],
\end{align}
and since the sensitive terms in $P$ and $Q$ are assumed to be linear in $x(t)$, we replace $x(t)$ in these terms with $E[x]$ to form $\bar{P}$ and $\bar{Q} := E[Q]$, respectively.
}

Using (\ref{eqn:ReceptorMarkov}) \rev{and} (\ref{eqn:Conditional-2}), (\ref{eqn:MutualInformation})
reduces to
\begin{equation}
    \label{eqn:MutualInfoMarkov}
	I(\mathbf{X};\mathbf{Y}) = \sum_{i=1}^n \sum_{y_i} \sum_{y_{i-1}} \sum_{x_i} p(y_i,x_i,y_{i-1}) \log \frac{p(y_i \given x_i,y_{i-1})}
	{ p(y_i \given y_{i-1})} .
\end{equation}
%
%
%

\rev{
Recall that a transition may be sensitive ($(y_{i-1},y_i) \in \mathcal{S}$) or insensitive ($(y_{i-1},y_i) \not\in \mathcal{S}$). For terms in  (\ref{eqn:MutualInfoMarkov}), consider the
insensitive transitions:
\begin{align}
    \nonumber
	\lefteqn{p(y_i,x_i,y_{i-1}) \log \frac{p(y_i \given x_i,y_{i-1})}
	{ p(y_i \given y_{i-1})}}&\\
	\label{eqn:Insensitive1}
		&= p(y_i,x_i,y_{i-1}) \log \frac{p(y_i \given y_{i-1})}
	{ p(y_i \given y_{i-1})} \\
		&= p(y_i,x_i,y_{i-1}) \log 1\\
		&= 0.
\end{align}
where (\ref{eqn:Insensitive1}) follows since the transition is insensitive, and is not a function of $x_i$; cf. (\ref{eqn:Conditional-2a}).} Thus for IID inputs, the mutual information (\ref{eqn:MutualInfoMarkov}) is calculated using the {\em sensitive transitions only}, i.e., those transitions in $\mathcal{S}$. \rev{With this in mind, we can rewrite (\ref{eqn:MutualInfoMarkov}) as
\begin{align}
	\nonumber \lefteqn{I(\mathbf{X};\mathbf{Y}) }&\\
    \label{eqn:MutualInfoMarkovSensitive0}
	&= \sum_{i=1}^n \sum_{(y_{i-1},y_i)\in\mathcal{S}} \sum_{x_i} p(y_i,x_i,y_{i-1}) \log \frac{p(y_i \given x_i,y_{i-1})}
	{ p(y_i \given y_{i-1})} \\
    \label{eqn:MutualInfoMarkovSensitive}
    &= \sum_{i=1}^n \sum_{\mathcal{A}_i}
    p(y_i \given x_i, y_{i-1}) p(y_{i-1}) p(x_i)
    \log \frac{p(y_i \given x_i,y_{i-1})}
	{ p(y_i \given y_{i-1})} 
\end{align}
where we let $\mathcal{A}_i = \{y_i,y_{i-1},x_i : (y_i,y_{i-1})\in\mathcal{S}^2, x_i \in \mathcal{X}\}$, i.e.
the same terms as the sum in (\ref{eqn:MutualInfoMarkovSensitive}), for the sake of brevity.
Also note that (\ref{eqn:MutualInfoMarkovSensitive}) follows from (\ref{eqn:MutualInfoMarkovSensitive0}) because the input $\mathbf{X}$ is IID.
}

%
%
\rev{Now consider the individual PMFs in (\ref{eqn:MutualInfoMarkovSensitive}), starting with $p(y_i \given x_i, y_{i-1})$.}
All transitions in $\mathcal{S}$ are dependent on the input $x_i$, and throughout this paper we assume that the 
sensitive transition rates depend linearly on the \rev{input signal intensity}. Thus \rev{(recall (\ref{eqn:Markov-last}))}
for non-self-transitions $(y_{i-1},y_i) \in \mathcal{S}$ (i.e., $y_{i-1} \neq y_i$),
\begin{equation}
	\label{eqn:TransitionsInS}
	p(y_i \given x_i,y_{i-1}) = q_{y_{i-1}y_i} x_i \Delta t .
\end{equation}
For self-transitions in $\mathcal{S}$ (i.e., $y_{i-1}\equiv y_i=y$) we have
\begin{align}
    \nonumber\lefteqn{p_{Y_i|X_i,Y_{i-1}}(y \given x_i,y) =}&\\ 
    \label{eqn:SelfTransition}
    & 1-\left(\sum_{y'\not=y,(y',y)\in\mathcal{S}}q_{yy'}x_i    - \sum_{y'\not=y,(y',y)\not\in\mathcal{S}}q_{yy'}\right)\Delta t ,
\end{align}
as seen in the diagonal entries of (\ref{eqn:Markov-last}). 
\rev{Similarly, the terms $p(y_i \given y_{i-1})$ can be obtained 
using (\ref{eqn:Conditional-2a})--(\ref{eqn:HomogeneousP}); we replace $x_i$ 
in (\ref{eqn:TransitionsInS})--(\ref{eqn:SelfTransition}) with $\bar{x}$.}

\rev{The terms $p(y_{i-1})$ represent the steady-state marginal probability that the receptor is in state $y$; for compact notation, let $\pi_{y_{i-1}} = p(y_{i-1})$.}
If the input $x$ is IID, as we assume throughout this paper, then $\pi_{y_{i-1}}$ exists if the Markov chain is irreducible, aperiodic, and positive recurrent; these conditions hold for all the examples we consider \rev{(recall (\ref{eqn:Conditional-2})--(\ref{eqn:HomogeneousP})).}%
\footnote{
\rev{For clarity, although $\pi_y$ may be written with a time-indexing subscript, e.g. $\pi_{y_i}$, this refers to the steady-state distribution of state $y_i \in \mathcal{Y}$, and does not imply that $\pi_{y}$ changes with time.}
}

Define the partial entropy function
\begin{equation}
	\label{eqn:PhiDefinition}
    \phi(p)=
\begin{cases}
0,&p=0\\
p\log p,&p\not=0
\end{cases}
\end{equation}
and let 
\begin{equation}
	\label{eqn:BinEnt}
    \mathscr{H}(p) = -\phi(p) - \phi(1-p)
\end{equation}
represent the binary entropy function.
\rev{Then we have the following result.}

%

\rev{
%
%
\begin{theorem}
    \label{thm:MutualInformationRate}
For an IID input distribution $p(x_i)$, the mutual information rate $\mathcal{I}(X;Y)$ is given by
\begin{align}
	\nonumber \lefteqn{\mathcal{I}(X;Y)} & \\ 
    \nonumber
	&= \sum_{(y_{i-1},y_i) \in \mathcal{S}} \pi_{y_{i-1}}
	\Bigg( 
    \sum_{x_i \in \mathcal{X}} p(x_i) \phi\Big(p(y_i \given x_i,y_{i-1})\Big)\\
	\label{eqn:GeneralIID1}
	&\quad\quad- \phi\left(\sum_{x_i \in \mathcal{X}} p(x_i) p(y_i \given x_i,y_{i-1}) \right) 
	\Bigg). 
\end{align}
Furthermore, $\ciid = \max_{p(x)} \mathcal{I}(X;Y)$.
\end{theorem}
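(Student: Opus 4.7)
The plan is to start from (\ref{eqn:MutualInfoMarkovSensitive}), pass to the per-symbol rate using stationarity of the output chain, rearrange the log-ratio into the form of (\ref{eqn:GeneralIID1}) via $\phi$, and finally handle the capacity identity by swapping a limit and a maximum. Since $X$ is IID and $p(y_i \given y_{i-1},x_i)$ defines a time-invariant kernel, $Y$ is a time-homogeneous Markov chain with transition matrix $\bar{P}$, and for each receptor in Section II it is finite, irreducible, and aperiodic. Thus $p(y_{i-1})$ converges geometrically to the unique stationary distribution $\pi$, so the summand of (\ref{eqn:MutualInfoMarkovSensitive}) at index $i$ equals its stationary counterpart up to an exponentially decaying error. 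Dividing $I(\mathbf{X};\mathbf{Y})$ by $n$ and sending $n \to \infty$ kills the transient and leaves exactly one copy of the stationary summand, yielding $\mathcal{I}(X;Y) = \sum_{(y_{i-1},y_i) \in \mathcal{S}} \pi_{y_{i-1}} \sum_{x_i} p(x_i)\, p(y_i \given x_i,y_{i-1}) \log \frac{p(y_i \given x_i,y_{i-1})}{p(y_i \given y_{i-1})}$.

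Next I would rewrite this expression in terms of $\phi$. Splitting the log as $\log p(y_i \given x_i,y_{i-1}) - \log p(y_i \given y_{i-1})$, the first piece gives $\sum_{x_i} p(x_i)\, \phi(p(y_i \given x_i,y_{i-1}))$ directly by the definition (\ref{eqn:PhiDefinition}). For the second piece, $\log p(y_i \given y_{i-1})$ is independent of $x_i$, so summing $p(x_i)\, p(y_i \given x_i,y_{i-1})$ over $x_i$ and invoking (\ref{eqn:Conditional-2a}) collapses the expression to $p(y_i \given y_{i-1}) \log p(y_i \given y_{i-1}) = \phi\!\left(\sum_{x_i} p(x_i)\, p(y_i \given x_i,y_{i-1})\right)$. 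Subtracting, weighting each sensitive edge $(y_{i-1},y_i) \in \mathcal{S}$ by $\pi_{y_{i-1}}$, and summing produces exactly (\ref{eqn:GeneralIID1}).

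For $\ciid = \max_{p(x)} \mathcal{I}(X;Y)$, I would swap the limit in (\ref{eqn:ciid}) with the maximum in (\ref{eqn:IIDCapacityDefinition}) via continuity and compactness: $p(x) \mapsto \frac{1}{n} I(\mathbf{X};\mathbf{Y})$ is continuous on the compact simplex of IID input distributions and converges uniformly to the expression in (\ref{eqn:GeneralIID1}), because the spectral gap of $\bar{P}$ is continuous in $p(x)$ and bounded below uniformly on the subsimplex where $\bar{P}$ remains irreducible. Hence $\ciid = \lim_n \max_{p(x)} \frac{1}{n} I(\mathbf{X};\mathbf{Y}) = \max_{p(x)} \mathcal{I}(X;Y)$; alternatively, this identity follows directly from the per-letter IID capacity formula for unit-output-memory channels in \cite{che05}. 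The main obstacle is precisely this uniformity-in-$p(x)$ step: one must ensure the spectral gap does not degenerate near boundary distributions where $\bar{P}$ could become reducible (e.g., $p(x) = \delta_0$ in ChR2, where no light shuts off the only sensitive transition). At such degenerate points the rate in (\ref{eqn:GeneralIID1}) equals $0$, and continuity of $\mathcal{I}$ on the interior extends the identity to the closure, so the maximum is in fact attained.
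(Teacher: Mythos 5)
Your proof is correct, and its first half follows essentially the paper's own route: both start from (\ref{eqn:MutualInfoMarkovSensitive}), identify the per-index summand with its stationary value, divide by $n$, and let the single transient contribution vanish in the Ces\`{a}ro limit before regrouping the logarithm into the two $\phi$ terms of (\ref{eqn:GeneralIID1}). (Your geometric-convergence justification of the transient is in fact slightly more careful than the paper's, which simply sets $p(y_{i-1})=\pi_{y_{i-1}}$ for all $i\geq 2$ and isolates only the $i=1$ term.) Where you genuinely diverge is the capacity identity $\ciid=\max_{p(x)}\mathcal{I}(X;Y)$. The paper proves it by an elementary sandwich: writing $I(\mathbf{X};\mathbf{Y})=T_1(p)+T_2(p,n)$ with $T_2(p,n)=(n-1)g(p)$ for a fixed per-symbol function $g$, it bounds $\ciid(n)/n$ between $\bigl(T_1(p_2)+T_2(p_2,n)\bigr)/n$ and $\bigl(T_1(p_1)+T_2(p_2,n)\bigr)/n$ and lets the bounded $T_1$ terms die. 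You instead interchange $\lim_n$ and $\max_p$ via uniform convergence on the input simplex, justified through continuity of the spectral gap of $\bar{P}$. Both work, but note that the spectral-gap machinery is heavier than needed here: once one adopts the paper's convention that the $i\geq 2$ summands are identical, the error $\frac{1}{n}I(\mathbf{X};\mathbf{Y})-g(p)=\frac{1}{n}\bigl(T_1(p)-g(p)\bigr)$ is $O(1/n)$ uniformly in $p$ simply because $T_1$ and $g$ are finite sums of bounded terms, so the uniformity you worry about at reducible boundary points (where the gap closes and $\pi$ may jump) is obtained for free and your closing continuity argument becomes unnecessary. The paper's sandwich buys robustness and brevity; your route buys a statement (uniform convergence of the normalized mutual information) that is of independent interest and connects naturally to the per-letter capacity formula of Chen and Berger that you cite.
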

\begin{proof}
    Divide the terms in (\ref{eqn:MutualInfoMarkovSensitive}) into the $i=1$ term, and all the remaining terms.
    Let $T_1(p(x_i))$ represent the $i=1$ term, emphasizing its dependence on the IID input distribution $p(x_i)$, so that
    \begin{align}
        T_1(p(x_i)) &= p(y_1 \given x_1, y_0) p(y_1) p(x_1)
    \log \frac{p(y_1 \given x_1,y_0)}
	{ p(y_1 \given y_0)}\\
	\label{eqn:T1}
	&=p(y_1 \given x_1) p(y_1) p(x_1)
    \log \frac{p(y_1 \given x_1)}
	{ p(y_1)} ,
    \end{align} 
    where (\ref{eqn:T1}) follows since $y_0$ is null. Let $T_2(p(x_i),n)$ represent the remaining terms, again dependent on $p(x_i)$ but also on $n$, so that
    \begin{align}
        \nonumber \lefteqn{T_2(p(x_i),n)}&\\
    &= \sum_{i=2}^n \sum_{\mathcal{A}_i} 
    p(y_i \given x_i, y_{i-1}) p(y_{i-1}) p(x_i)
    \log \frac{p(y_i \given x_i,y_{i-1})}
	{ p(y_i \given y_{i-1})}\\
	\label{eqn:ProofMIRate0}
	&= (n-1) \sum_{\mathcal{A}_i} 
    p(y_i \given x_i, y_{i-1}) p(y_{i-1}) p(x_i)
    \log \frac{p(y_i \given x_i,y_{i-1})}
	{ p(y_i \given y_{i-1})}
    \end{align}
    recalling the definition of $\mathcal{A}_i$ from the discussion after (\ref{eqn:MutualInfoMarkovSensitive}).
    Using (\ref{eqn:InfoRateDefinition}),
    \begin{align}
        \nonumber\lefteqn{\mathcal{I}(X;Y)}&\\
        &= 
        \lim_{n \rightarrow \infty} \frac{T_1(p(x_i))}{n} + \lim_{n \rightarrow \infty} \frac{T_2(p(x_i),n)}{n} \\
        \label{eqn:ProofMIRate1}
	&= \sum_{\mathcal{A}_i} 
    p(y_i \given x_i, y_{i-1}) p(y_{i-1}) p(x_i)
    \log \frac{p(y_i \given x_i,y_{i-1})}
	{ p(y_i \given y_{i-1})} ,
    \end{align}
    and (\ref{eqn:GeneralIID1}) follows after some manipulation.
    
    To show that $\ciid = \max_{p_X(x)} \mathcal{I}(X;Y)$,
    recall the definitions of $\ciid(n)$ and $\ciid$ in (\ref{eqn:IIDCapacityDefinition}) and (\ref{eqn:ciid}), respectively. 
    Referring to $p(x_i)$ as $p$ for brevity,
    \begin{align}
        \ciid(n) = \max_{p} \big(T_1(p) + T_2(p,n)\big) .
    \end{align}
    Let $p_1$ represent the IID input distribution maximizing the term $T_1(p)$, and let $p_2$ represent the IID input distribution maximizing the term $T_2(p,n)$. From (\ref{eqn:ProofMIRate0}), $p_2$ is independent of $n$. Furthermore,
    \begin{align}
        \label{eqn:ProofMIRate2}
        \frac{T_1(p_2)}{n} + \frac{T_2(p_2,n)}{n} \leq \frac{1}{n}\ciid(n) \leq \frac{T_1(p_1)}{n} + \frac{T_2(p_2,n)}{n} .
    \end{align}
    Taking the limit throughout (\ref{eqn:ProofMIRate2}) as $n \rightarrow \infty$, the $T_1$ terms vanish as they are constant with respect to $n$. Comparing (\ref{eqn:ProofMIRate0}) and (\ref{eqn:ProofMIRate1}), $p_2$ also maximizes $\mathcal{I}(X;Y)$. The result follows.
\end{proof}
}


\subsection{Limit of $\mathcal{I}(X;Y)/\Delta t$ as $\Delta t \rightarrow 0$}

In this section we  consider the {\em continuous time limit} of $\mathcal{I}(X;Y) / \Delta t$ as $\Delta t \rightarrow 0$, and give \rev{our second main result} (Theorem \ref{thm:Theorem1}): that in the continuous time limit, the mutual information rate is expressed simply as a product of the average flux through sensitive edges, and the relative entropy between the prior distribution on $x$, and the posterior given a transition. \rev{While we do not claim to derive the mutual information rate of the continuous time channel, the continuous time limit of the discrete-time mutual information rate is a quantity of interest in its own right.}

%
%
\rev{
First, we show that the steady-state distribution $\pi_y$ is independent of $\Delta t$:}
\rev{
\begin{lemma}
\label{lem:SteadyState}
Suppose $\pi_y$ is the normalized left eigenvector of $\bar{Q}$ with eigenvalue 0 (see (\ref{eqn:HomogeneousP})). Define the set $\mathcal{T}$ so that $\Delta t \in \mathcal{T}$ if $P$ from (\ref{eqn:Markov-last}) is a valid transition probability matrix for all $x \in \mathcal{X}$.  Then $\pi_y$ is the normalized left eigenvector of $\bar{P}$ with eigenvalue 1, for all $\Delta t \in \mathcal{T}$.
\end{lemma}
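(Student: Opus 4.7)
The plan is to proceed by direct algebraic substitution, exploiting the explicit linear relationship $\bar{P} = I + \Delta t\, \bar{Q}$ from equation (\ref{eqn:HomogeneousP}). Since the relationship between $\bar{P}$ and $\bar{Q}$ is affine in $\Delta t$, the eigenvector structure transfers immediately, and most of the work is really just verifying bookkeeping on normalization and on the set $\mathcal{T}$.

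First I would write down the assumption explicitly: $\pi \bar{Q} = \mathbf{0}$ (as a row vector) with $\pi$ normalized so that $\sum_y \pi_y = 1$. Then I would left-multiply $\bar{P} = I + \Delta t\, \bar{Q}$ by $\pi$ to obtain $\pi \bar{P} = \pi I + \Delta t\, \pi \bar{Q} = \pi + \Delta t \cdot \mathbf{0} = \pi$, which exhibits $\pi$ as a left eigenvector of $\bar{P}$ with eigenvalue $1$. The normalization of $\pi$ is unchanged because we multiplied by a scalar matrix equation, so the same normalization $\sum_y \pi_y = 1$ still holds.

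Second, I would verify that the hypothesis $\Delta t \in \mathcal{T}$ plays the role of ensuring that $\pi$ may legitimately be interpreted as a stationary distribution of a bona fide Markov chain: for $\Delta t \in \mathcal{T}$, $\bar{P}$ has nonnegative entries and unit row sums (the latter follows from $\bar{Q}$ having zero row sums), so the left-eigenvector equation $\pi \bar{P} = \pi$ really does give a stationary distribution. I would remark that the condition $\Delta t \in \mathcal{T}$ is used only to guarantee this interpretation; the eigenvector identity itself holds for all $\Delta t$.

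I do not anticipate a genuine obstacle here: the result is essentially the statement that a matrix and its affine image $I + \Delta t\, \bar{Q}$ share left eigenvectors, shifted only in eigenvalue. The only subtlety worth mentioning, if desired, is uniqueness: under the irreducibility and aperiodicity assumptions on the discrete-time chain $\bar{P}$ stated after equation (\ref{eqn:Conditional-2a}), the eigenvalue $1$ of $\bar{P}$ is simple, so the $\pi$ constructed from $\bar{Q}$ is the unique stationary distribution of $\bar{P}$ and therefore does not depend on $\Delta t$.
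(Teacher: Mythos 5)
Your proof is correct and follows essentially the same route as the paper's: left-multiply the affine relation $\bar{P} = I + \Delta t\, \bar{Q}$ by $\pi$ to get $\pi \bar{P} = \pi$, with uniqueness from irreducibility (the paper invokes Perron--Frobenius). Your additional remark clarifying that $\Delta t \in \mathcal{T}$ serves only to make $\bar{P}$ a bona fide stochastic matrix is a nice touch but does not change the argument.
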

\begin{proof}
    The proof is given in the appendix.
\end{proof}
Note that $\mathcal{T}$ contains all ``sufficiently small'' $\Delta t$. It follows from the lemma that the steady state distribution $\pi_y$ is the same for both continuous and discrete time.
}

Note that the mutual information rate $\mathcal{I}(X;Y)$ in (\ref{eqn:GeneralIID1}) has units of nats per channel use, and that channel uses have duration $\Delta t$. Moreover, the transition probabilities $p(y_i \given x_i,y_{i-1})$ in (\ref{eqn:TransitionsInS})\rev{--}(\ref{eqn:SelfTransition}) are linear functions of $\Delta t$. 
Substituting \rev{the discrete-time transition probabilities \eqref{eqn:Markov-last}} into (\ref{eqn:GeneralIID1}), the non-self-transition probabilities go to zero while the self-transition probabilities go to 1, so 
$\mathcal{I}(X;Y) \rightarrow 0$ as $\Delta t \rightarrow 0$. This should not be surprising: intuitively, as the time step shrinks, less information can be expressed per time step.
However, dividing by $\Delta t$ (and obtaining $\mathcal{I}(X;Y) / \Delta t$),
the information rate {\em per second} is finite. 
It is then useful to consider how this rate behaves as $\Delta t \rightarrow 0$.

Let $\mathcal{S}^\prime \subset \mathcal{S}$ represent the set of sensitive transitions excluding self transitions, i.e.,
\begin{equation}
	\mathcal{S}^\prime = \{(y_{i-1},y_i) :  (y_{i-1},y_i) \in \mathcal{S}, y_{i-1} \neq y_i \} .
\end{equation}
Also let $\mathcal{S}\backslash\mathcal{S}^\prime$ represent the components of $\mathcal{S}$ excluding $\mathcal{S}^\prime$ (i.e., {\em only} the sensitive self transitions).

For any edge $(y,y')$ define the limiting value of that edge's contribution to the mutual information rate, as $\Delta t\to 0$, as
\begin{align}
\nonumber \lefteqn{\iota(y,y') =} & \\
\nonumber
& \lim_{\Delta t\to 0}\frac1{\Delta t}  \pi_{y}
	\Bigg( 
    \sum_{x \in \mathcal{X}} p(x) \phi\Big(p(y' \given x,y)\Big)\\
    &
	\quad\quad - \phi\left(\sum_{x \in \mathcal{X}} p(x) p(y' \given x,y) \right) 
	\Bigg)  
\label{eq:iota}
\end{align}
The limit calculation depends on whether $y=y'$. In case $y\not=y'$, we have $p(y'\given x,y)=q_{yy'}x\Delta t$ (see \eqref{eqn:TransitionsInS}) and
\begin{align}
    &\sum_{x}p(x)\phi(p(y'\given x,y))-\phi\left(\sum_x p(x)p(y'\given x,y) \right) \nonumber \\ \nonumber
    &=\Delta t\Bigg\{ 
    \left(\sum_x q x p(x) \right)\log q +
    \left( 
    \sum_x q  p(x) x\log x 
    \right) \\
    \nonumber
    & \quad\quad - \left(\sum_x q x p(x) \right)\log\left( \sum_x q x p(x) \right)
    \Bigg\} \\
    & \quad\quad +o(\Delta t),\text{ as }\Delta t\to 0^+\\  
    &=  q\Delta t(E(x\log x) - E(x)\log(E(x)))  +o(\Delta t),\text{ as }\Delta t\to 0^+\\ 
    &=q\Delta t(E\phi(x) - \phi(Ex))+o(\Delta t),\text{ as }\Delta t\to 0^+.
\end{align}
On the other hand, in the case when $y=y'$, $\sum_{x}p(x)\phi(p(y'\given x,y))-\phi\left(\sum_x p(x)p(y'\given x,y) \right)=o(\Delta t)$, as $\Delta t\to 0^+$.  Therefore, these terms do not contribute to the mutual information.

Using these \rev{results}, we can rewrite (\ref{eqn:GeneralIID1}) as
\rev{\begin{align}
	\nonumber\lefteqn{\lim_{\Delta t \rightarrow 0}\frac{\mathcal{I}(X;Y)}{\Delta t}}&\\ 
	&= \sum_{(y_{i-1},y_i) \in \mathcal{S}^\prime} \iota(y_{i-1},y_i) 
	+\sum_{(y_{i-1},y_i) \in \mathcal{S} \backslash \mathcal{S}^\prime}\iota(y_{i-1},y_i).
	\label{eqn:GeneralIID2}
\end{align}}
Using (\ref{eqn:TransitionsInS})\rev{--}(\ref{eqn:SelfTransition}), we consider the two additive terms in (\ref{eqn:GeneralIID2}) separately.
For the first term (summing over $\mathcal{S}^\prime$),
we use l'H\^opital's rule: 
in the denominator we have (trivially)
\begin{align}
	\label{eqn:limit-denominator}
	\frac{d}{d\Delta t} \Delta t &= 1,
\end{align}
%
%
and from the numerator, we have
\begin{align}
	\nonumber \lefteqn{
	\lim_{\Delta t \rightarrow 0} \frac{d}{d\Delta t} \sum_{(y_{i-1},y_i) \in \mathcal{S}^\prime} \pi_{y_{i-1}}
	\Bigg( 
    \sum_{x_i \in \mathcal{X}} p(x_i) \phi\Big(q_{y_{i-1}y_i} x_i \Delta t \Big)}\\
    &\quad
	- \phi\left(\sum_{x_i \in \mathcal{X}} p(x_i) q_{y_{i-1}y_i} x_i \Delta t \right) 
	\Bigg)  \\
	\nonumber
	&=  \lim_{\Delta t \rightarrow 0} \sum_{(y_{i-1},y_i) \in \mathcal{S}^\prime} \pi_{y_{i-1}}
	\Bigg( 
    \sum_{x_i \in \mathcal{X}} p(x_i)  \frac{d}{d\Delta t} \phi\Big(q_{y_{i-1}y_i} x_i \Delta t \Big)\\
	&\quad\quad -  \frac{d}{d\Delta t} \phi\left(\sum_{x_i \in \mathcal{X}} p(x_i) q_{y_{i-1}y_i} x_i \Delta t \right) 
	\Bigg)\\
	\nonumber
	&= \sum_{(y_{i-1},y_i) \in \mathcal{S}^\prime} \pi_{y_{i-1}} \Bigg( \sum_{x_i \in \mathcal{X}}p(x_i) q_{y_{i-1}y_i}x_i \log (q_{y_{i-1}y_i}x_i)\\
	&\quad\quad - q_{y_{i-1}y_i} \bar{x} \log (q_{y_{i-1}y_i} \bar{x})
	\Bigg)
	\label{eqn:LimitMutualInfo}
\end{align}
where $\bar{x} = \sum_{x_i \in \mathcal{X}} x_i p(x_i)$ is the average input concentration. 
For the second term (summing over $\mathcal{S}\backslash\mathcal{S}^\prime$), a similar derivation shows that the limit is zero.  

Simplifying further, we have
\begin{align}
	\nonumber\lefteqn{\lim_{\Delta t \rightarrow 0} \frac{\mathcal{I}(X;Y)}{\Delta t}}&\\ 
	\nonumber
	&= \sum_{(y_{i-1},y_i) \in \mathcal{S}^\prime} \pi_{y_{i-1}} \Bigg( \sum_{x_i \in \mathcal{X}}p(x_i) q_{y_{i-1}y_i}x_i \log (q_{y_{i-1}y_i}x_i)\\
	&\quad\quad- q_{y_{i-1}y_i} \bar{x} \log (q_{y_{i-1}y_i} \bar{x}) \Bigg) \\
	\nonumber
	&= \sum_{(y_{i-1},y_i) \in \mathcal{S}^\prime} \pi_{y_{i-1}} q_{y_{i-1}y_i} 
	 \bar{x}\log (q_{y_{i-1}y_i}) \\
	 \nonumber
	 &\quad\quad+ 
	 \sum_{(y_{i-1},y_i) \in \mathcal{S}^\prime} \pi_{y_{i-1}} q_{y_{i-1}y_i}\sum_{x_i \in \mathcal{X}}p(x_i) x_i\log(x_i) \\
	 \nonumber
	&\quad\quad- \sum_{(y_{i-1},y_i) \in \mathcal{S}^\prime} \pi_{y_{i-1}} q_{y_{i-1}y_i} \bar{x} \log (q_{y_{i-1}y_i})\\
	&\quad\quad- \sum_{(y_{i-1},y_i) \in \mathcal{S}^\prime} \pi_{y_{i-1}} q_{y_{i-1}y_i} \bar{x} \log(\bar{x}) \\
	\nonumber
	&= \sum_{(y_{i-1},y_i) \in \mathcal{S}^\prime} \pi_{y_{i-1}} q_{y_{i-1}y_i}\sum_{x_i \in \mathcal{X}}p(x_i) x_i\log(x_i)\\
	&\quad\quad
	- \sum_{(y_{i-1},y_i) \in \mathcal{S}^\prime} \pi_{y_{i-1}} q_{y_{i-1}y_i} \bar{x} \log(\bar{x}) \\
	\nonumber
	&= \left( \sum_{(y_{i-1},y_i) \in \mathcal{S}^\prime} \pi_{y_{i-1}} q_{y_{i-1}y_i} \right)\\
	&\quad\quad \cdot
	\left( \sum_{x_i \in \mathcal{X}}p(x_i) x_i\log(x_i) - \bar{x}\log\bar{x} \right) .
	\label{eqn:FactoredMutualInformation}
\end{align}

The {\em steady-state flux} $J_{y_{i-1}y_i}$ through an edge $(y_{i-1},y_i)$ in the state transition graph 
is defined as 
\begin{equation}
	J_{y_{i-1}y_i} := \pi_{y_{i-1}} q_{y_{i-1}y_i} \bar{x} .
\end{equation}
Similarly, the {\em net steady-state flux} through the sensitive (non-self) edges in the graph is
\begin{align}
	J_{\mathcal{S}^\prime} &:= \sum_{(y_{i-1},y_i) \in \mathcal{S}^\prime} J_{y_{i-1}y_i} \\
	\label{eqn:SteadyStateFlux}
	&= \sum_{(y_{i-1},y_i) \in \mathcal{S}^\prime} \pi_{y_{i-1}} q_{y_{i-1}y_i} \bar{x} ,
\end{align}
Expressing (\ref{eqn:FactoredMutualInformation}) in terms of $J_{\mathcal{S}^\prime}$,
we have
\begin{align}
	\lim_{\Delta t \rightarrow 0} \frac{\mathcal{I}(X;Y)}{\Delta t}
	&= \frac{1}{\bar{x}} J_{\mathcal{S}^\prime} 
	\left( \sum_{x_i \in \mathcal{X}}p(x_i) x_i\log x_i - \bar{x}\log\bar{x} \right) \\
	\label{eqn:FactoredMutualInformation2}
	&= J_{\mathcal{S}^\prime} 
	\left( \sum_{x_i \in \mathcal{X}} \frac{p(x_i)x_i}{\bar{x}}\log x_i - \log\bar{x} \right)
\end{align}
\rev{We define 
\begin{equation}
	\label{eqn:nu}
	\nu(x_i) := \frac{p(x_i) x_i}{\bar{x}} . 
\end{equation}
}
Since $\nu(x_i)$ is positive for all $x_i$, and since
\begin{align}
	\sum_i \nu(x_i) &= \sum_i \frac{p(x_i) x_i}{\bar{x}}  = \frac{\bar{x}}{\bar{x}} = 1,
\end{align}
\rev{it follows that} $\nu(x_i)$ forms a probability distribution, in general different from $p(x_i)$.
We discuss the physical interpretation of $J_{\mathcal{S}^\prime}$ and 
$\nu(x_i)$ in the next section.

Using $\nu(x_i)$, we can rewrite (\ref{eqn:FactoredMutualInformation2}) as
\begin{align}
	\lim_{\Delta t \rightarrow 0} \frac{\mathcal{I}(X;Y)}{\Delta t}
	&= J_{\mathcal{S}^\prime} 
	\left( \sum_{x_i \in \mathcal{X}} \nu(x_i) \log x_i - \log\bar{x} \right) \\
	&= J_{\mathcal{S}^\prime} 
	\left( \sum_{x_i \in \mathcal{X}} \nu(x_i) \log x_i -  \sum_{x_i \in \mathcal{X}} \nu(x_i) \log\bar{x} \right) \\
	&= J_{\mathcal{S}^\prime} 
	\left( \sum_{x_i \in \mathcal{X}} \nu(x_i) \log \frac{x_i}{\bar{x}} \right) \\
	&= J_{\mathcal{S}^\prime} 
	\left( \sum_{x_i \in \mathcal{X}} \nu(x_i) \log \frac{p(x_i)x_i}{p(x_i)\bar{x}} \right)\\
	&= J_{\mathcal{S}^\prime} 
	\left( \sum_{x_i \in \mathcal{X}} \nu(x_i) \log \frac{\nu(x_i)}{p(x_i)} \right)\\
	&= J_{\mathcal{S}^\prime} D(\nu \:\rev{\|}\: p) ,
\end{align}
where $D(\cdot\:\rev{\|}\:\cdot)$ represents the Kullback-Leibler divergence.

The preceding derivation, \rev{including Lemma \ref{lem:SteadyState},} allows us to state the following result.
\begin{theorem}
	\label{thm:Theorem1}
	For finite-state Markov signal transduction systems described by (\ref{eqn:Markov-last}), with IID inputs,
	\begin{equation}
    	\label{eqn:Theorem1}
		\lim_{\Delta t \rightarrow 0} \frac{\mathcal{I}(X;Y)}{\Delta t} = J_{\mathcal{S}^\prime} D(\nu \:\rev{\|}\: p) ,
	\end{equation}
	with $J_{\mathcal{S}^\prime}$ defined in (\ref{eqn:SteadyStateFlux}) and $\nu$ defined in (\ref{eqn:nu}).
\end{theorem}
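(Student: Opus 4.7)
The plan is to start from the expression for $\mathcal{I}(X;Y)$ given in Theorem \ref{thm:MutualInformationRate}, divide by $\Delta t$, and take the limit as $\Delta t \to 0$. I would first partition the edges in $\mathcal{S}$ into non-self sensitive transitions $\mathcal{S}'$ and sensitive self-transitions $\mathcal{S} \setminus \mathcal{S}'$, and handle each group separately. For the sensitive self-transitions, the conditional probability $p(y\given x,y)$ tends to $1$ as $\Delta t\to 0$ by \eqref{eqn:SelfTransition}; a direct Taylor expansion of $\phi$ about $1$ (using $\phi(1-\epsilon)=-\epsilon+O(\epsilon^2)$) shows that the corresponding bracket in \eqref{eqn:GeneralIID1} is $o(\Delta t)$, so these edges contribute zero in the limit.

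For each non-self sensitive edge $(y,y')\in\mathcal{S}'$, I would substitute $p(y'\given x,y)=q_{yy'}x\Delta t$ from \eqref{eqn:TransitionsInS} into the bracket. A direct expansion yields singular $\log\Delta t$ pieces in both $\sum_x p(x)\phi(q_{yy'}x\Delta t)$ and $\phi(\sum_x p(x)q_{yy'}x\Delta t)$, but each is multiplied by $q_{yy'}\bar{x}\Delta t$, so they cancel exactly between the two terms. After this cancellation, the bracket equals $q_{yy'}\Delta t\,[\sum_x p(x)x\log(q_{yy'}x)-\bar{x}\log(q_{yy'}\bar{x})]+o(\Delta t)$, and the $\log q_{yy'}$ contributions cancel as well because $\sum_x p(x)x=\bar{x}$. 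Dividing by $\Delta t$ and passing to the limit (equivalently, applying l'H\^opital's rule to the original combined fraction) produces the finite per-edge contribution $\pi_y q_{yy'}[\sum_x p(x)x\log x - \bar{x}\log\bar{x}]$.

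The key observation at this point is that the bracket $\sum_x p(x)x\log x - \bar{x}\log\bar{x}$ depends only on the input distribution, not on the chosen edge, so I can factor it out of the sum over $\mathcal{S}'$. The remaining edge-dependent factor is $\sum_{(y,y')\in\mathcal{S}'}\pi_y q_{yy'}$, which is precisely $J_{\mathcal{S}'}/\bar{x}$ by \eqref{eqn:SteadyStateFlux}. Finally, using $\sum_x\nu(x)=1$ I would rewrite the input bracket as $\bar{x}\sum_x\nu(x)\log(x/\bar{x})$, and then use the identity $x/\bar{x}=\nu(x)/p(x)$, which is just the definition of $\nu$, to recognize this as $\bar{x}\,D(\nu\|p)$. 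Multiplying, the $\bar{x}$ factors cancel and the claimed identity $J_{\mathcal{S}'}D(\nu\|p)$ emerges.

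The main obstacle is the clean handling of the $\log\Delta t$ singularity: the individual $\phi$ terms each diverge, and one must verify that the divergence is identical on both sides so that the difference is genuinely $O(\Delta t)$. Lemma \ref{lem:SteadyState} is also crucial, since it lets me hold $\pi_y$ fixed throughout the limit rather than having to track a $\Delta t$-dependent stationary distribution of $\bar P$; without it, every term in the edge sum would carry a correction in $\Delta t$ that would need to be controlled separately. Once the cancellations are justified and $\pi_y$ is known to be $\Delta t$-independent, the rest of the argument is algebraic rearrangement.
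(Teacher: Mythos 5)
Your proposal is correct and follows essentially the same route as the paper: split $\mathcal{S}$ into $\mathcal{S}'$ and the sensitive self-transitions, show the latter contribute $o(\Delta t)$, expand $\phi(q_{yy'}x\Delta t)$ so that the $\log\Delta t$ and $\log q_{yy'}$ pieces cancel between the two terms of the bracket (the paper phrases this via l'H\^opital's rule, which is the same computation), invoke Lemma \ref{lem:SteadyState} to keep $\pi_y$ fixed, and factor the edge sum into $J_{\mathcal{S}'}/\bar{x}$ times an input-only bracket that rewrites as $\bar{x}\,D(\nu\|p)$. No gaps.
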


\subsection{Physical interpretation}

The factorization in (\ref{eqn:FactoredMutualInformation}) gives us a useful physical interpretation of 
the mutual information in this system. 

First consider $J_{\mathcal{S}^\prime}$. Physically, if one watched only for transitions along 
edge $(y_{i-1},y_i)$ (with the rest of the graph assumed to be at steady state), 
$J_{y_{i-1}y_i}$ gives 
the average rate at which those transitions would be observed; that is, $J_{y_{i-1}y_i}$ is the \pt{mean} flux through the transition $y_{i-1} \rightarrow y_i$. Thus, $J_{\mathcal{S}^\prime}$ is the average rate through {\em all} the sensitive edges, i.e., the {\em net flux}.

Now consider $D(\nu \:\rev{\|}\: p)$, and note that the distribution $\nu(x_i)$ is a {\em posterior} distribution of $x_i$. To see this,
consider a random variable $y \in \{0,1\}$, with conditional distribution
\begin{equation}
	\label{eqn:YXconditional}
	p_{Y|X}(1 \given x_i) =\kappa x_i ,
\end{equation}
where $\kappa$ is a positive constant ($0 \leq \kappa \leq \frac{1}{x_i}$ to make a valid probability),
and $p_{Y|X}(0 \given x_i) = 1- p_{Y|X}(1 \given x_i)$.
The marginal distribution $p_Y(1)$ is then given by
\begin{align}
	p_Y(1) &= \sum_{x_i} p(x_i) p_{Y|X}(1 \given x_i)\\
	&= \sum_{x_i} p(x_i) \kappa x_i\\
	&= \kappa \bar{x} .
\end{align}
With this definition, $\nu(x_i)$ is the posterior distribution of $x$ given $y=1$:
\begin{align}
	p_{X|Y}(x_i \given 1) &= \frac{p(x_i)p_{Y|X}(1 \given x_i)}{p_Y(1)}\\
	&= \frac{p(x_i)  \kappa x_i}{\kappa \bar{x}}\\
	&= \frac{p(x_i) x_i}{\bar{x}} = \nu(x_i) .
\end{align}
Physically, consider the example of a ligand-gated channel where $x_i$ is the concentration of ligands near the receptor at input $i$. With $i \in \{\L,\H\}$ (i.e., inputs $x_\L$ and $x_\H$), suppose we select one molecule at random from those near the receptor, and set $y = 1$ if the molecule is a ligand; $y = 0$ otherwise. Then $p_{Y|X}(1 \given x_\L) \propto x_L$ and $p_{Y|X}(1 \given x_\H) \propto x_\H$, with $\kappa$ as the constant of proportionality; this satisfies (\ref{eqn:YXconditional}). For example, suppose $x_\L$ is measured in {\em number concentration} of ligands, i.e., number of ligands per volume $V$. Then $p_{Y|X}(1 \given x_i) = x_i / n$ (for $i \in \{\L,\H\}$), where $n$ is the number concentration of all molecules, ligands and otherwise, near the receptor, and $\kappa = 1/n$.

In general, physical systems where the probability of response $p(y \given x)$ is directly proportional to the input $x$ 
fit into this framework, emphasizing the importance of this modeling assumption made in Section II.

\subsection{Shannon capacity of receptors with a single sensitive non-self transition}

We now give our \rev{third} main result, showing that the Shannon capacity $C$ is equal to the IID capacity $\ciid$ for a number of sensitive transitions $|\mathcal{S}^\prime| \leq 1$, and furthermore that the capacity-achieving distribution has a simple form. As a consequence, this leads directly to the Shannon capacity of ChR2; we give this capacity in the example below. The result is a generalization of related results in \cite{ThomasEckford2016}.

Recall $\mathcal{S}^\prime \subset \mathcal{S}$ represent the set of transitions, {\em excluding} self-transitions.

\begin{theorem}
\label{thm:capacity}
For any receptor with $|\mathcal{S}^\prime| \leq 1$,
\begin{enumerate}
	\item $\ciid$ is achieved with all probability mass on $x_\L$ and $x_\H$; and
    \item $C = \ciid$.
\end{enumerate}
\end{theorem}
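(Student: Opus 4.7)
My plan is to dispose of the trivial case $|\mathcal{S}'|=0$ first (no transition depends on the input, $\mathbf{Y}\perp\mathbf{X}$, so $C=\ciid=0$ with any distribution optimal) and then focus on $|\mathcal{S}'|=1$, with unique non-self sensitive edge $(y^*,y^{**})$ and rate constant $q := q_{y^*y^{**}}$. For part~(1), I would start from (\ref{eqn:GeneralIID1}), which now contains only the $(y^*,y^{**})$ term and (if present) the sensitive self-term $(y^*,y^*)$. Each term has the form $\pi_{y^*}\bigl(E_p[\phi(r(X))] - \phi(E_p[r(X)])\bigr)$ for an \emph{affine} function $r(x)$---namely $r(x)=qx\Delta t$ for the non-self edge and $r(x)=1-(qx+K)\Delta t$ for the self edge, where $K$ collects the insensitive out-rates from $y^*$. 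Because $\phi''(p)=1/p>0$, the composition $g := \phi\circ r$ is convex, and because $r$ is affine, $E_p[r(X)]=r(\bar x)$; hence each term becomes $\pi_{y^*}\bigl(E_p[g(X)]-g(\bar x)\bigr)$, where the prefactor $\pi_{y^*}$ depends on $p$ only through $\bar x$ (the sole input-dependent entry of $\bar Q$ being on the $(y^*,y^{**})$ edge). Fixing $\bar x\in[x_\L,x_\H]$ then reduces the optimization to maximizing the \emph{linear} functional $E_p[g(X)]$ of a convex $g$ over distributions on $[x_\L,x_\H]$ with given mean; the Edmundson--Madansky (convex-order) inequality pins the maximizer at the endpoint-supported two-point distribution on $\{x_\L,x_\H\}$. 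Maximizing over $\bar x$ preserves this support, giving~(1).

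For part~(2), the inequality $\ciid\leq C$ is immediate since IID inputs are admissible. For the reverse, I would expand $I(\mathbf{X};\mathbf{Y}) = H(\mathbf{Y})-H(\mathbf{Y}\mid \mathbf{X})$, apply the Markov property of $Y$ given $X$ to get $H(\mathbf{Y}\mid \mathbf{X})=\sum_i H(Y_i\mid Y_{i-1},X_i)$, and combine with the ``conditioning reduces entropy'' bound $H(\mathbf{Y})\leq\sum_i H(Y_i\mid Y_{i-1})$ to obtain
\[I(\mathbf{X};\mathbf{Y}) \;\leq\; \sum_{i=1}^n I(X_i;Y_i\mid Y_{i-1}) \;=\; \sum_{i=1}^n \mu_i\, I_0(\tilde p_i),\]
where $\mu_i := P(Y_{i-1}=y^*)$, $\tilde p_i(x) := P(X_i=x\mid Y_{i-1}=y^*)$, and $I_0(p)$ is the mutual information of the single-letter sub-channel from $X_i$ to $Y_i$ at origin state $y^*$ under input $p$. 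Only $y=y^*$ survives the state-decomposition of $I(X_i;Y_i\mid Y_{i-1})$ because every other state has only input-insensitive outgoing transitions. This single-letter bound coincides with the standard capacity formula for a finite-state Markov channel whose state equals the previous output (cf.~\cite{AsnaniPermuterWeissman2013IEEE_ISIT}), optimized over stationary conditional input laws $\{p(x\mid y)\}_y$. Since only $p(x\mid y^*)$ enters the objective, and the IID input $p(x) := p(x\mid y^*)$ realizes exactly that conditional law (with the same induced stationary weight $\pi_{y^*}$ determined by $\bar x$), we conclude $C = \max_p \pi_{y^*}(\bar x)\, I_0(p) = \ciid$.

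The main obstacle is the last step of part~(2): justifying that the single-letter sum $\sum_i\mu_i I_0(\tilde p_i)$ is asymptotically tight as an upper bound on $I(\mathbf{X};\mathbf{Y})$, and that the optimum is attained by a stationary conditional input policy. My plan is to invoke, rather than re-derive, the established single-letter capacity characterization for indecomposable finite-state Markov channels with state observable at both ends; this specializes transparently to our setting, because with $|\mathcal{S}'|=1$ the entire policy collapses to a single conditional $p(x\mid y^*)$, which is faithfully represented by an IID input of the same distribution. Part~(1), by contrast, is essentially a calculus exercise made tractable by the affine-in-$x$ structure of the transition probabilities and the convexity of $\phi$.
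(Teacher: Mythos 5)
Your proposal is correct and follows essentially the same route as the paper, whose own proof is only a sketch deferring to \cite{ThomasEckford2016}: part (1) by holding the stationary occupancy (equivalently $\bar{x}$) fixed and exploiting convexity of $\phi$ composed with the affine transition probabilities to push mass to $\{x_\L,x_\H\}$, and part (2) by bounding $C$ by the feedback capacity of the unit-output-memory channel and observing that the optimizing conditional input policy collapses to an IID law when only one state has a sensitive outgoing edge. The only caveat is that your step from $\sum_i \mu_i\, I_0(\tilde p_i)$ to the stationary single-letter expression genuinely requires the cited Chen--Berger characterization (the $\mu_i$ are time-$i$ marginals, not stationary occupancies), which you correctly flag and which is also the load-bearing external ingredient in the paper's own argument.
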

\begin{proof}
	The case of $|\mathcal{S}^\prime| = 0$ is trivial: the state is never sensitive to the input, so $\mathcal{I}(X;Y) = 0$ for all input distributions. 
    
	Now consider $|\mathcal{S}^\prime| = 1$. We sketch the proof: results in \cite{ThomasEckford2016} were presented for a two-state receptor where only one transition was sensitive; many of the results have the same form.
The first part of the theorem follows from \cite[Thm 1]{ThomasEckford2016}, noting from (\ref{eqn:GeneralIID1}) that any system with $|\mathcal{S}^\prime| = 1$ has the same form, apart from the marginal distribution $\pi_{y_{i-1}}$, which is held constant in the proof of \cite[Thm 1]{ThomasEckford2016}. The second part of the theorem follows from \cite[Thm 2]{ThomasEckford2016}, noting that $C$ is only a function of the input distribution in the sensitive state. 
\end{proof}

\subsection{Example}

We now give an example calculation of the mutual information and IID capacity, \pt{by which we obtain the channel capacity of channelrhodopsin.}
\begin{ex}[ChR2]
	\label{ex:ChR2-MI}
	Referring to the rate matrix for ChR2 (\ref{eqn:ChR2-rate-matrix}),
	there are exactly two sensitive transitions: first, the 
	transition from $\C_1$ to $\O_2$, represented
	by $q_{12} x(t)$; and second, the self-transition from $\C_1$ to $\C_1$, represented by 
	$R_1 = - q_{12}x(t)$. Thus, $\mathcal{S} = \{(\C_1,\O_2), (\C_1,\C_1)\}$
    and $\mathcal{S}^\prime = \{(\C_1,\O_2)\}$.
	
Suppose $\mathcal{X} = \{x_\L, x_\H\}$, i.e., the input light source can only be off ($x_\L$) or on ($x_\H$). Let $p_\L = \Pr(x = x_\L)$ and $p_\H = \Pr(x = x_\H) = 1-p_\L$.
    
	Recalling the transformation of
	rates into probabilities (\ref{eqn:Markov-last}), and substituting into (\ref{eqn:GeneralIID1}), we have
	\begin{align}
		\nonumber\lefteqn{\mathcal{I}(X;Y)} &\\ 
		\nonumber &= 
		\pi_{\C_1} \Big( p_\L \phi(\Delta t q_{12}x_\L) + p_\H \phi(\Delta t q_{12}x_\H)\\
		\nonumber
		&\quad\quad- \phi \big( p_\L \Delta t q_{12}x_\L + p_\H \Delta t q_{12}x_\H \big) \Big)\\
		\nonumber
        & \quad + \: \pi_{\C_1} \Big( p_\L \phi(1-\Delta t q_{12}x_\L) + p_\H \phi(1- \Delta t q_{12}x_\H)\\
		&\quad\quad - \phi \big( 1-p_\L \Delta t q_{12}x_\L - p_\H \Delta t q_{12}x_\H \big) \Big)
	\end{align}
    where the first term represents the transition $(\C_1,\O_2)$, and the second term represents the self-transition $(\C_1,\C_1)$, both of which are sensitive.
    Continuing the derivation,
    \begin{align}
		\nonumber\lefteqn{\mathcal{I}(X;Y)} &\\
		\nonumber
		&= \pi_{\C_1} \Big( 
        \binent(p_\L \Delta t q_{12}x_\L + p_\H \Delta t q_{12}x_\H)\\
        &\quad\quad - p_\L \binent(\Delta t q_{12}x_\L) - p_\H \binent(\Delta t q_{12}x_\H)
			 \Big) \\
       \label{eqn:ChR2MutualInformationExample}
		&= \left( \frac{q_{23}q_{31}}{q_{23}q_{31} + \bar{x}q_{12}q_{31} + \bar{x}q_{12}q_{23}} \right)\\
		&\quad\quad\cdot
		\Big( \binent(\Delta t q_{12} \bar{x}) - p_\L \binent(\Delta t q_{12}x_\L) - p_\H \binent(\Delta t q_{12}x_\H)
 \Big) ,
	\end{align}
where $\bar{x}$ is the average input. 

Finally, consider $\mathcal{I}(X;Y) / \Delta t$ as $\Delta t \rightarrow 0$, as in (\ref{eqn:LimitMutualInfo}).
The steady-state occupancy probability of $\C_1$, $\pi_{\C_1}$, is independent of $\Delta t$. Thus, 
from Theorem \ref{thm:Theorem1}, we have
\begin{align}
	\nonumber\lefteqn{\lim_{\Delta t \rightarrow 0} \frac{\mathcal{I}(X;Y)}{\Delta t}}&\\ 
	&= J_{\mathcal{S}^\prime} D(\nu \:\rev{\|}\: p)\\
	&= 
	\left( \sum_{(y_{i-1},y_i) \in \mathcal{S}^\prime} \pi_{y_{i-1}} q_{y_{i-1}y_i} \right)
	\left( \sum_{x_i \in \mathcal{X}}\nu(x_i) \log\frac{\nu(x_i)}{p(x_i)} \right) \\
	&= 
	\pi_{\C_1} q_{12}
	\left(  p_\L\frac{x_\L}{\bar{x}} \log \frac{x_\L}{\bar{x}} + p_\H\frac{x_\H}{\bar{x}} \log \frac{x_\H}{\bar{x}} \right) \\
	\nonumber
	&= \frac{q_{12}q_{23}q_{31}}{q_{23}q_{31} + \bar{x}q_{12}q_{31} + \bar{x}q_{12}q_{23}}\\
	&\quad\quad\cdot
	\left(  p_\L\frac{x_\L}{\bar{x}} \log \frac{x_\L}{\bar{x}} + p_\H\frac{x_\H}{\bar{x}} \log \frac{x_\H}{\bar{x}} \right) . 
	\label{eqn:ChR2MutualInformationExample2}
\end{align}

In Figure \ref{fig:MutualInformationFigure}, we illustrate the effect of step size on the mutual information calculation, using  (\ref{eqn:ChR2MutualInformationExample}) for the solid lines (for various values of $\Delta t > 0$) and (\ref{eqn:ChR2MutualInformationExample2}) for the dotted line (as $\Delta t \rightarrow 0$).  From this figure, the IID capacity and the capacity-achieving value of $p_\L$ may be found by taking the maximum over the curve of interest. This value clearly changes for different values of $\Delta t$; however, the IID capacity is around $\ciid \approx 65$ bits/s, and the capacity-achieving $p_\L$ is around $p_\L \approx 0.99$. For any finite value of $\Delta t$, it is interesting to note that the discrete-time approximation over-estimates the mutual information as $\Delta t \rightarrow 0$. 
%
\end{ex}

\begin{figure}[t!]
\begin{center}
\includegraphics[width=3.5in]{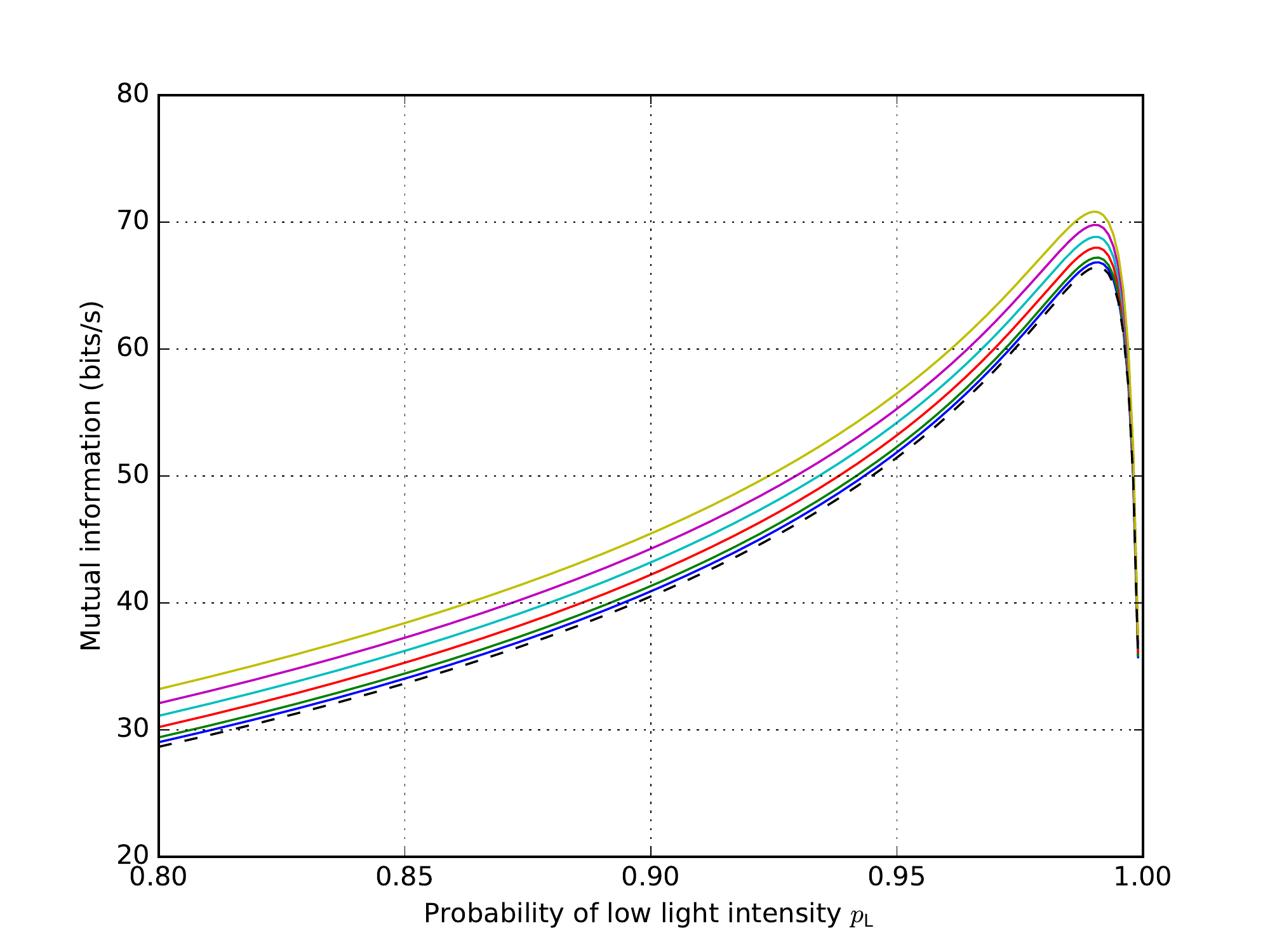}
\end{center}
\caption{\label{fig:MutualInformationFigure} Plot for ChR2, illustrating the effect of $\Delta t$ on $\mathcal{I}(X;Y)$ from (\ref{eqn:ChR2MutualInformationExample}). The dashed black line represents $\Delta t \rightarrow 0$. Solid lines, from bottom, represent: $\Delta t = 0.01$ (blue), $\Delta t = 0.02$ (green), $\Delta t = 0.04$ (red), $\Delta t = 0.06$ (cyan), $\Delta t = 0.08$ (magenta), and $\Delta t = 0.1$ (tan), all in milliseconds.}
\end{figure}

From Example \ref{ex:ChR2-MI}, ChR2 has $|\mathcal{S}^\prime| = 1$. Thus, {\em ChR2 satisfies the conditions of Theorem \ref{thm:capacity}}, and has $C = \ciid$, where $\ciid$ is given in (\ref{eqn:ChR2MutualInformationExample2}).  
\rev{Performing the maximization numerically, on the $\Delta t \rightarrow 0$ line, the maximum value of $\mathcal{I}(X;Y)$ is found near $p_\L = 0.99$ where $\mathcal{I}(X;Y) = 66$ bits/s, which gives} \pt{the channel capacity \rev{$C$} (\textit{sensu} Shannon) of channelrhodopsin.}

A similar calculation can be performed for ACh and CaM. However, the resulting expressions are not as compact as (\ref{eqn:ChR2MutualInformationExample2}), so we exclude them from the paper. Mutual information plots for ACh and CaM (from which $\ciid$ may \rev{also be obtained numerically}) are given in Figures \ref{fig:AChFigure} and \ref{fig:CaMFigure}, respectively.
However, ACh and CaM both have $|\mathcal{S}^\prime| > 1$ (see Figures \ref{fig:ACh} and \ref{fig:CaM}), and do not satisfy the condition in Theorem \ref{thm:capacity}. It remains an open question as to whether $C = \ciid$ for these receptors. The proof of \cite[Thm 2]{ThomasEckford2016} (and of Theorem \ref{thm:capacity}) relies on the feedback capacity being achieved by the IID input distribution. However, if there is more than one sensitive transition, the receiver can use the feedback to distinguish between these transitions, and can select an optimal input distribution for each. Thus, the feedback-capacity-achieving input distribution depends on the feedback, and is not \rev{necessarily} IID. If $C = \ciid$, a different proof technique is required, and we do not address this case.

\section{Discussion}
In this paper we have presented a general framework for signal transduction systems, in which the states of a receptor form a directed graph, some subset of the edges of which represent transitions with intensities modulated by an external signal.  This signal provides the channel input, and the state of the receptor -- a trajectory on the graph -- represents the channel output.  
\pt{We illustrate the signal transduction model, the calculation of mutual information and the IID capacity for several examples: light intensity transduction by channel rhodopsin, acetylcholine concentration transduction by the nicotinic acetylcholine receptor, and transduction of intracellular calcium ion concentration by the calmodulin protein.}

Several caveats are in order, which qualify our results and motivate our future work.

In many signal transduction systems, only a subset of the receptor states engender an observable output signal.  For example, the channelrhodopsin receptor states $\C_1,\O_2,\C_3$ (\textit{cf}.~Fig.~\ref{fig:ChR2}) are not directly observed by the cell in the membrane of which the receptor is embedded; rather it is the net current (zero for states $\C_1,\C_3$ and finite for state $\O_2$) that impacts the rest of the cell.  Similarly, for the nicotinic acetylcholine receptor (\textit{cf}.~Fig.~\ref{fig:ACh}) the state of the receptor as observed by the cell is either ``open'' (states $\O_1,\O_2$) or ``closed'' (states $\C_3,\C_4,\C_5$).  For the calmodulin receptor, there are understood to be four functionally distinct states: both occupied \Ca~binding sites on the N-terminus end of the protein, both occupied \Ca~binding sites on the C-terminus end of the protein, all four \Ca~binding sites occupied, or fewer than two on each end (\textit{cf.}~Fig.~\ref{fig:CaM}; dashed lines indicate physiologically equivalent states).
The diagram \eqref{diag:xyz} shows the general structure of such a channel, with output $Z(t)$ a function of the channel state $Z=f(Y(t))$ (compare with the diagram in (\ref{diag:xy})):
\begin{equation}\label{diag:xyz}
\begin{array}{ccccccccccc}
&X_1&&X_2&&X_3&&X_4&&X_5&\cdots\\
&\downarrow&&\downarrow&&\downarrow&&\downarrow&& \downarrow   \\
(Y_0)&\longrightarrow&Y_1&\longrightarrow&Y_2&\longrightarrow&Y_3&\longrightarrow&Y_4&\cdots\\
&&\downarrow&&\downarrow&&\downarrow&& \downarrow   \\
&&Z_1&&Z_2&&Z_3&&Z_4&\cdots
\end{array}\end{equation}
By virtue of the information processing inequality, the mutual information rate between $X$ and $Z$ cannot exceed that between $X$ and $Y$.  Preliminary results suggest that the size of the difference -- the information gap --  depends strongly on the network architecture, and the positioning of sensitive edges relative to observable transitions (data not shown).  Detailed consideration of mutual information for Markovian signal transduction channels with such partially observed outputs will be undertaken elsewhere.

We have assumed that the directed edges comprising the receptor's state transition graph fall into two classes, either insensitive (fixed transition rates) or sensitive (transition rates proportional to the input signal intensity).  A more realistic assumption would allow for a dark current (finite transition rate at zero signal intensity), a nonlinear, monotonically increasing transition rate as a function of increasing intensity, or a signaling threshold or minimum intensity value.  Under the IID  input scenario it is optimal to limit the input values to those inducing the maximal and minimal transition rates, in which case several more realistic scenarios could in principle be reduced to the scenario we consider here. For example, a dark current could be captured by adding an additional insensitive channel parallel to a sensitive channel.

We have considered a general class of signal transduction models that are naturally framed as continuous time channels.  Our basic signal transduction channel model process is conditionally Markovian, given the (time varying) input signal.  The simplest model in this class would correspond to Kabanov's Poisson channel \cite{Kabanov1978}, consisting of a single transition with rate modulated by the input.  In order to simplify the analysis of such models it is convenient to translate them into analogous discrete-time models.  The general structure of such as model is a finite state, discrete-time channel in which the probability transition matrix is modulated by the (discrete time) input sequence.  Our previously-discussed results \cite{ThomasEckford2016} introduced a minimal such model, the BIND channel, consisting of a single receptor molecule with two states (bound, $\B$, and unbound, $\U$) with one transition rate ($\U\to\B$) sensitive to the input (ligand molecule concentration) and the other transition rate ($\B\to\U$) insensitive.  In general, the structure of a conditionally Markovian signal-transduction channel under time discretization corresponds to the Unit Output Memory (UOM) channel class analyzed by \rev{Chen and Berger} \cite{che05}.  As mentioned previously,
in \rev{\cite{AsnaniPermuterWeissman2013IEEE_ISIT,PermuterAsnaniWeissman2014IEEETransIT} Asnani, Permuter and Weissman} present several examples of UOM channels that they call POST (prior output is the state) channels, which are also special cases of the channels analyzed by Chen and Berger.  (The BIND channel can be interpreted as a type of POST channel although it is distinct from the examples in \cite{AsnaniPermuterWeissman2013IEEE_ISIT,PermuterAsnaniWeissman2014IEEETransIT}.)
Thus our channel models for channel rhodopsin, the nicotinic acetylcholine receptor and calmodulin may all be seen as examples of Chen and Berger's UOM channel class.


\begin{figure}[t!]
\begin{center}
\includegraphics[width=3.5in]{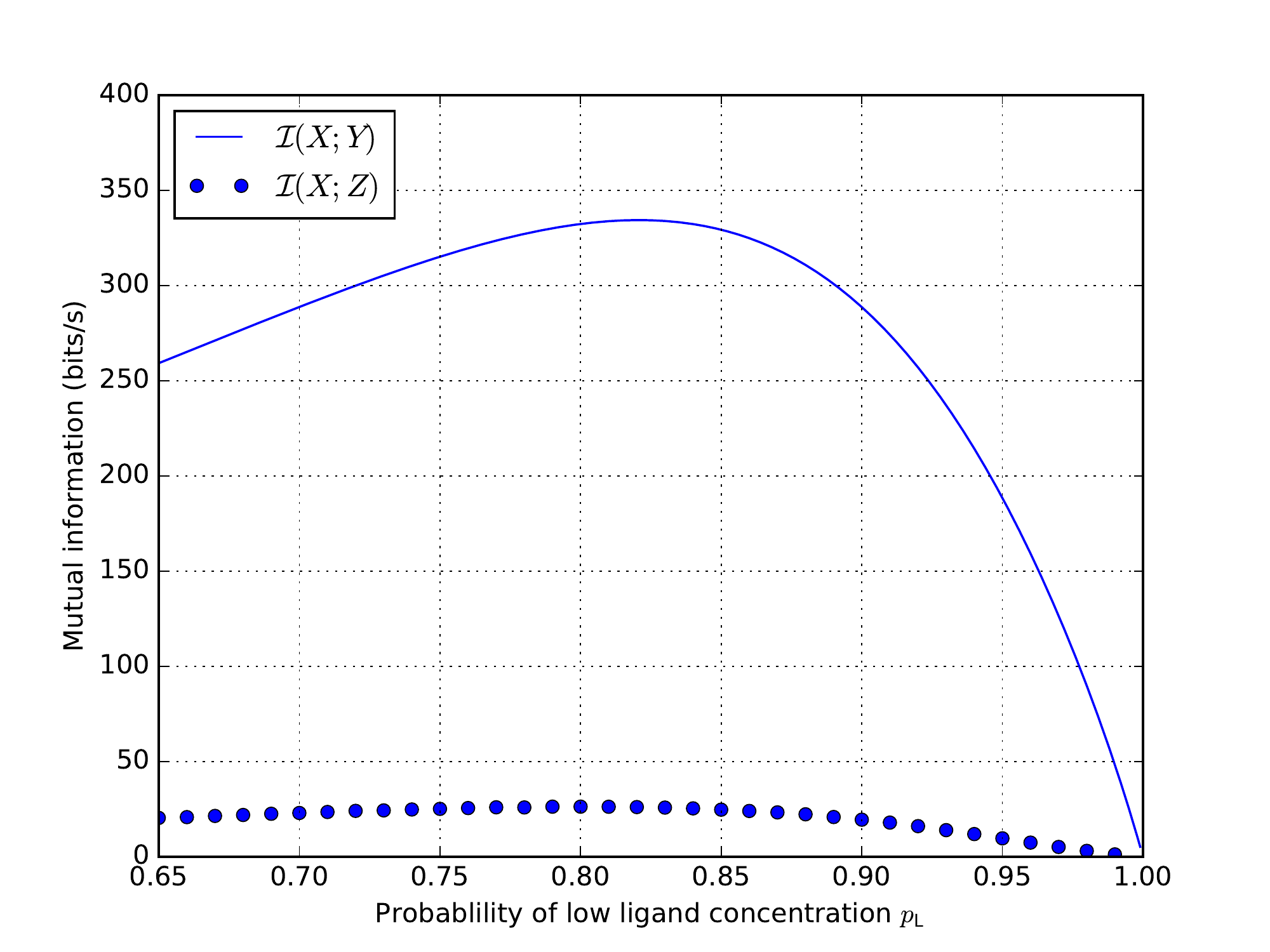}
\end{center}
\caption{\label{fig:AChFigure} Plots of $\mathcal{I}(X;Y)$ and $\mathcal{I}(X;Z)$ for ACh, using $\Delta t = 0.02$ ms. Solid line is from (\ref{eqn:GeneralIID1}), while dots represent {\em Monte Carlo} simulations.}
\end{figure}

\begin{figure}[t!]
\begin{center}
\includegraphics[width=3.5in]{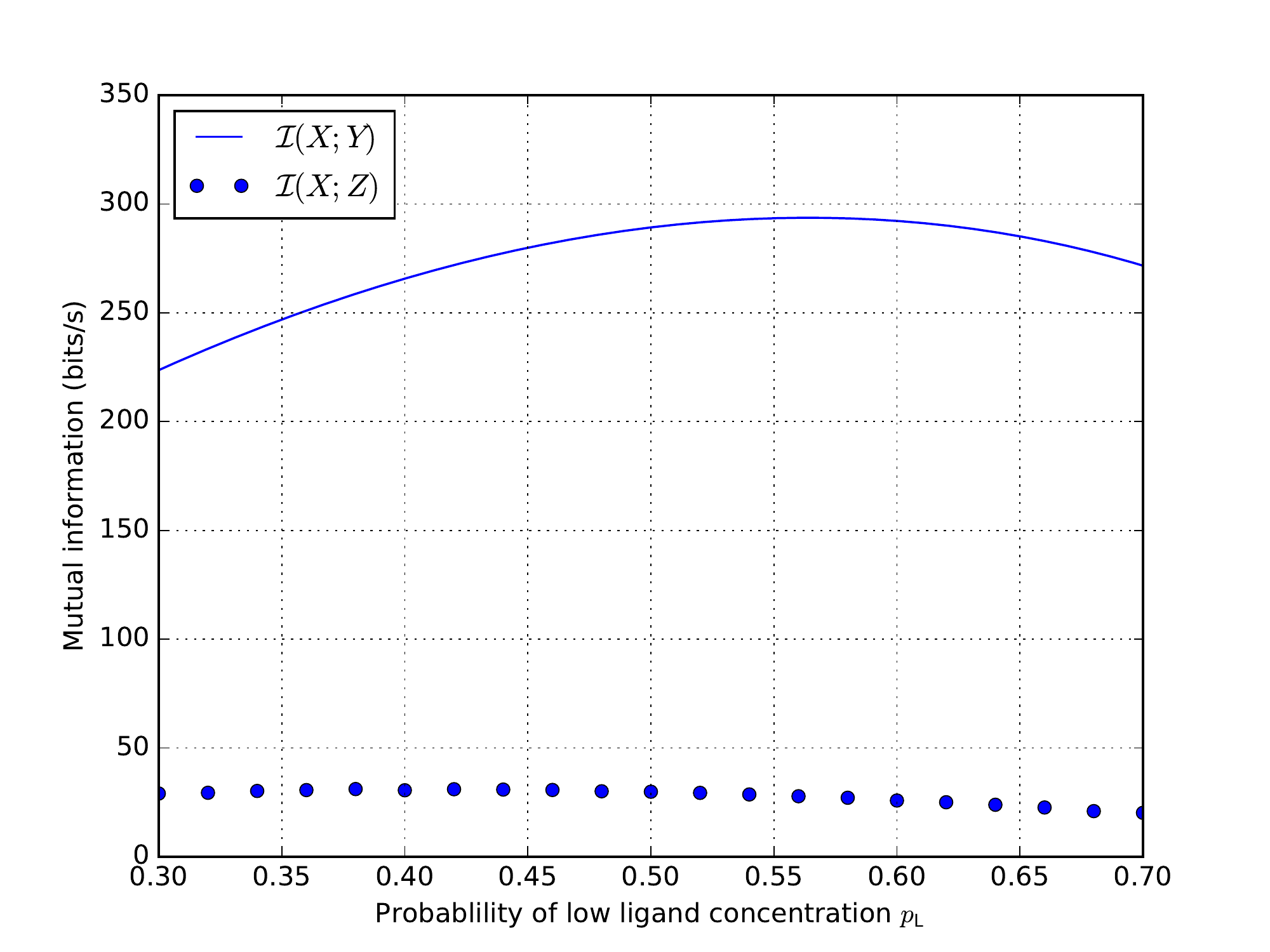}
\end{center}
\caption{\label{fig:CaMFigure} Plots of $\mathcal{I}(X;Y)$ and $\mathcal{I}(X;Z)$ for CaM, using $\Delta t = 0.002$ ms. Solid line is from (\ref{eqn:GeneralIID1}), while dots represent {\em Monte Carlo} simulations.}
\end{figure}

\appendix

%
%
\rev{

By definition (see (\ref{eqn:HomogeneousP}))
\begin{align}
    \pi_y \bar{Q} = 0 \cdot \pi_y = 0 ,
\end{align}
so
\begin{align}
    \pi_y \Delta t \bar{Q} = 0 \cdot \pi_y = 0 ,
\end{align}
i.e. $\pi_y$ is also the zero eigenvector of $\Delta t \bar{Q}$, for any $\Delta t$.

It is well known that adding $I$ to a matrix adds 1 to each eigenvalue. 
Thus, 
\begin{align}
    \label{eqn:Lem1Proof1}
    \pi_y \bar{P} = \pi_y (I + \Delta t \bar{Q}) &= (0 + 1) \pi_y = \pi_y .
\end{align}
Uniqueness of $\pi_y$ follows from the Perron-Frobenius theorem, since $\bar{Q}$ is irreducible, so the lemma follows from (\ref{eqn:Lem1Proof1}).
}

\bibliographystyle{IEEEtran} 
\bibliography{MolecularInfoTheory,infotheory,signaling,Cowan,PJT,neuroscience,Dicty,stoch_chem,MCell}

\begin{IEEEbiography}[{\includegraphics[width=1in,height
=1.25in,clip,keepaspectratio]{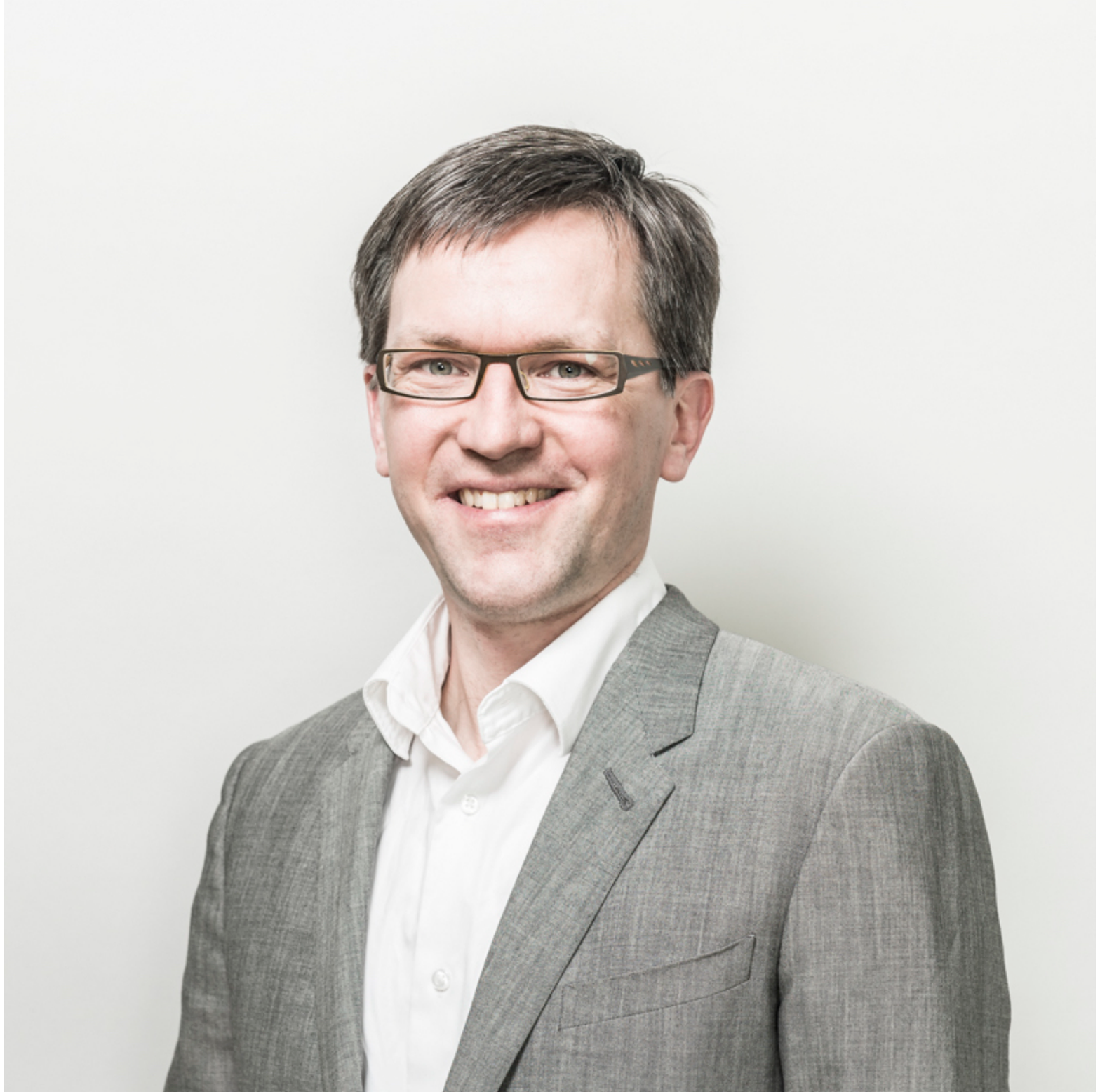}}]{Andrew W. Eckford}
is an Associate Professor in the Department of Electrical Engineering and Computer Science at York University, Toronto, Ontario. He received the B.Eng. degree from the Royal Military College of Canada in 1996, and the M.A.Sc. and Ph.D. degrees from the University of Toronto in 1999 and 2004, respectively, all in Electrical Engineering. Andrew held postdoctoral fellowships at the University of Notre Dame and the University of Toronto, prior to taking up a faculty position at York in 2006. Andrew's research interests include the application of information theory to nonconventional channels and systems, especially the use of molecular and biological means to communicate. His research has been covered in media including {\em The Economist}, {\em The Wall Street Journal}, and {\em IEEE Spectrum}. 
His research received the 2015 IET Communications Innovation Award, and was a finalist for the 2014 Bell Labs Prize. Andrew is also a co-author of the textbook Molecular Communication, published by Cambridge University Press.
\end{IEEEbiography}

\begin{IEEEbiography}[{\includegraphics[width=1in,height
=1.25in,clip,keepaspectratio]{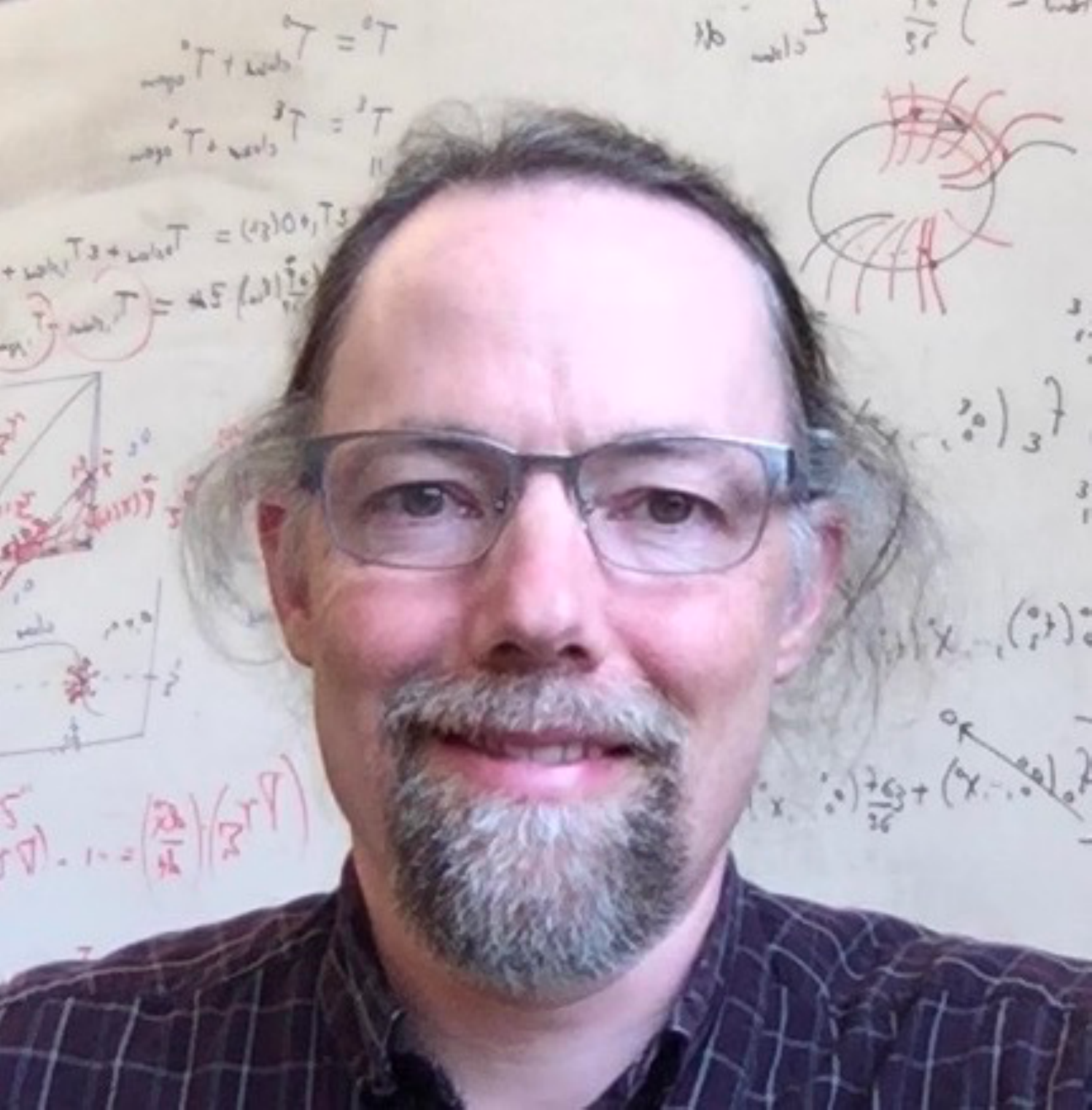}}]{Peter J. Thomas} studies communication and control in complex adaptive biological systems. He obtained an M.A. in the Conceptual Foundations of Science and a Ph.D. in Mathematics from The University of Chicago in 2000. Following postdoctoral work in the Computational Neurobiology Laboratory at The Salk Institute for Biological Studies, he taught mathematics, neuroscience, and computational biology first at Oberlin College and now at Case Western Reserve University, where he is Professor in the Department of Mathematics, Applied Mathematics, and Statistics. He holds secondary appointments in Biology, in Cognitive Science, and in Electrical Engineering and Computer Science.  Prof. Thomas has been a Fulbright scholar and a recipient of a Simons Foundation fellowship.  He currently serves as co-Editor-in-Chief of {\em Biological Cybernetics}.
\end{IEEEbiography}

\end{document}